\newcommand{\tw}{\text{tw}}
\newcommand{\sn}{\text{sn}}
\newcommand{\eps}{\varepsilon}
\newcommand{\MST}{$\mathcal{EMST}$}
\newcommand{\bigO}{\mathcal{O}}
\@nx\else[{#1}]\fi}
\algnewcommand\algorithmicforeach{\textbf{for each}}
\title{Geometric spanners of bounded tree-width}
\author{Kevin Buchin}{TU Dortmund, Germany}{kevin.buchin@tu-dortmund.de}{https://orcid.org/0000-0002-3022-7877}{}
\author{Carolin Rehs}{TU Dortmund, Germany}{carolin.rehs@tu-dortmund.de}{https://orcid.org/0000-0002-8788-1028}{}
\author{Torben Scheele}{TU Dortmund, Germany}{torben.scheele@tu-dortmund.de}{https://orcid.org/0009-0006-6119-6598}{}
\authorrunning{K.~Buchin, C.~Rehs, T.~Scheele} 
\keywords{Computational Geometry, Geometric Spanner, Tree-width} 
\begin{document}

\maketitle
\begin{abstract}
Given a point set $P$ in the Euclidean space, a geometric $t$-spanner $G$ is a graph on $P$ such that for every pair of points, the shortest path in $G$ between those points is at most a factor $t$ longer than the Euclidean distance between those points. The value $t\geq 1$ is called the dilation of $G$. 
Commonly, the aim is to construct a $t$-spanner with additional desirable properties.
In graph theory, a powerful tool to admit efficient algorithms is bounded tree-width.
We therefore investigate the problem of computing geometric spanners with bounded tree-width and small dilation $t$. 

Let $d$ be a fixed integer and $P \subset \mathbb{R}^d$ be a point set with $n$ points. We give a first algorithm to compute an $\mathcal{O}(n/k^{d/(d-1)})$-spanner on $P$  with tree-width at most $k$. The dilation obtained by the algorithm is asymptotically worst-case optimal for graphs with tree-width $k$: We show that there is a set of $n$ points such that every spanner of tree-width $k$ has dilation $\mathcal{O}(n/k^{d/(d-1)})$.
We further prove a tight dependency between tree-width and the number of edges in sparse connected planar graphs, which admits, for point sets in $\mathbb{R}^2$, a plane spanner with tree-width at most $k$ and small maximum vertex degree. 

Finally, we show  an almost tight bound on the minimum dilation of a spanning tree of $n$ equally spaced points on a circle, answering an open question asked in previous work.
\end{abstract}

\section{Introduction}

Geometric spanners are an extensively studied area in computational geometry, see \cite{Bose.2013,Narasimhan.2007} for surveys.
 A geometric spanner for a point set $P \in \mathbb{R}^d$ is a weighted graph on $P$, where the weight of an edge is the Euclidean distance between its endpoints, aiming for sparsity while avoiding long paths between two points in $P$. This is formalised by the dilation $t= \max\left\{ \frac{d(p,p')}{|pp'|} \mid p,p' \in P \right\}$, where $d(p,p')$ is the distance between two points in $G$ and $|pp'|$ is the Euclidean distance between $p$ and $p'$. A geometric spanner with dilation at most $t$ is also called a \emph{$t$-spanner}.
 
When geometric spanners were introduced in 1986~\cite{chew1986}, sparsity was obtained by requiring planarity. Since then, plane geometric spanners have been widely studied~\cite{Bose.2013}. Plane geometric spanners with small dilation can be obtained by different triangulations. An important tool here is the Delaunay triangulation which has a dilation of at most 1.998~\cite{xia2013}. Further triangulations that lead to a constant stretch factor are the greedy triangulation and the minimum weight triangulation, which both fulfil the so-called $\alpha$-diamond-property~\cite{DasJ89,drysdale2001exclusion}. By~\cite{DasJ89}, every triangulation that fulfils this property is a geometric spanner with dilation depending on $\alpha$.

If, instead of minimising $t$ for a fixed number of edges, we ask for a graph with at most some fixed dilation $t$ and a minimal number of edges, the greedy spanner, which inserts an edge $\{u,v\}$ whenever there is not already a path of length at most $t$ between $u$ and $v$, achieves asymptotically optimal edge number. It can be computed in $\mathcal{O}(n^2 \log n)$ time using $\mathcal{O}(n^2)$ space~\cite{bose2010computing}. Further, the greedy spanner has asymptotically optimal edge count, and bounded maximum vertex degree.

But also other measures are of interest: Especially spanners with a linearly bounded number of edges or a bounded vertex degree have been widely researched~\cite{klein2007computing,Narasimhan.2007,smid2007well,gudmundsson2018dilation,bose2010computing,das1996,kanj.2017,bonichon2010}.

In graph theory, an important tool to design efficient algorithms for in general NP-hard problems is to regard these problems on graphs which are bounded in certain parameters. Especially graphs with bounded \emph{tree-width}~\cite{RS83, RS86} allow polynomial-time algorithms for many in general NP-hard problems using a dynamic programming algorithm on the tree decomposition. 
By Courcelle's Theorem, every problem that is describable in monadic second order logic with vertex and edge quantification is solvable in polynomial time for graphs with bounded tree-width~\cite{CO00}.
Thus, geometric spanners with bounded tree-width would allow polynomial time solutions for many in general NP-hard problems and thus be a strong tool for many geometric applications. For instance, Cabello and Knauer~\cite{CABELLO2009815} recently gave some examples for efficient algorithms for geometric problems on graphs with bounded tree-width. They show that via orthogonal range searching, for graphs of $n$ vertices and tree-width $k$ with $k\geq 3$, the sum of the distances between all pairs of vertices can be computed in $\bigO(n \log^{k-1} n)$ time. Further, they show that the dilation of a geometric graph of bounded tree-width can be computed in $\bigO(n \log^{k+1} n)$ time. That makes it possible to compute the dilation of every tree-width bounded graph, but does not lead to tree-width bounded spanners with low dilation. 
In~\cite{Cabello2022Computing}, Cabello shows how to compute the inverse geodesic length in graphs of bounded tree-width. Further, Cabello and Rote~\cite{cabello2010obnoxious} show how to compute obnoxious centers in graphs with bounded tree-width in $\bigO(n \log n)$ time.

Therefore, we aim to construct spanners with bounded tree-width. There are spanners with sublinear tree-width. More specifically, since planar graphs have tree-width $\bigO(\sqrt{n})$ \cite{Lipton1979} (where $n$ is the number of vertices), for point sets in the Euclidean plane, constant dilation spanners with tree-width $\bigO(\sqrt{n})$ can be obtained by plane spanner approaches such as the Delaunay triangulation, or the greedy triangulation. For point sets in $\mathbb{R}^d$, we point out that the greedy spanner has tree-width $\bigO(n^{1-1/d})$, using separator results from \cite{Le2024}. 

While this gives some first results, constructing spanners with tree-width $k \in o(n^{1-1/d})$ will require other constructions. The only known construction for small tree-width and a guarantee on the dilation is the Euclidean minimum spanning tree: it has tree-width $k=1$ and worst-case dilation $n-1$. In~\cite{klein2007computing,CHEONG2008188} the authors show that it is NP-hard to compute a minimal dilation tree. Since trees are graphs with tree-width $1$, computing a minimal dilation spanner with bounded tree-width is also NP-hard.

We present a first algorithm that provides a trade-off between tree-width and the dilation:

\begin{restatable}{theorem}{upperBoundHigherDim}
\label{thm:upper bound higher dim}
Given a set $P\subset \mathbb{R}^d$ of $n$ points for some fixed $d$ and a positive integer $k\leq n^{1-1/d}$. There is a geometric spanner of tree-width $k$ on $P$ with a dilation of $\bigO(n/k^{d/(d-1)})$ and bounded degree that can be computed in time $\bigO(n^2\log n)$.
\end{restatable}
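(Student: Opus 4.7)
My plan is to interpolate between the Euclidean MST (tree-width~$1$, dilation~$\Theta(n)$) and a bounded-degree constant-dilation greedy spanner (tree-width~$\Theta(n^{1-1/d})$, dilation~$\bigO(1)$) via clustering. I partition $P$ into $N=\Theta(n/k^{d/(d-1)})$ geometric clusters $C_1,\dots,C_N$, each of size $\Theta(k^{d/(d-1)})$ and of bounded aspect ratio, using a balanced hierarchical decomposition such as a fair-split tree or a grid-based cell covering. Within every $C_i$ I compute a bounded-degree greedy $t$-spanner~$S_i$ for a fixed constant~$t$. By the separator bound for greedy spanners cited in the introduction, $\tw(S_i)=\bigO(|C_i|^{1-1/d})=\bigO(k)$, the maximum degree is $\bigO(1)$, and each $S_i$ has dilation at most~$t$.

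To stitch the clusters together I pick one representative $c_i\in C_i$, compute a Euclidean MST~$T$ on the $N$ representatives, and add the edges of~$T$ to the final graph. The tree-width of the resulting graph is bounded by gluing the local tree decompositions of the~$S_i$ along the representatives using~$T$ as a scaffold: since adjacent clusters share exactly one vertex and the pattern of sharing is a tree, the combined decomposition has width $\max_i\tw(S_i)+\bigO(1)=\bigO(k)$. The bounded-degree property is inherited from the~$S_i$ together with a constant number of inter-cluster edges per vertex, and the $\bigO(n^2\log n)$ running time is dominated by the per-cluster greedy spanners and the MST on the~$N$ representatives.

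For the dilation, consider $p\in C_i$ and $q\in C_j$. If $i=j$ the ratio is $\bigO(1)$; otherwise the natural route is $p\to c_i$ in~$S_i$, then $c_i\to c_j$ along~$T$, then $c_j\to q$ in~$S_j$. Since $T$ is an MST on $N$ points in~$\mathbb{R}^d$, its worst-case dilation is $\bigO(N)=\bigO(n/k^{d/(d-1)})$, which dominates whenever $|pq|$ is at least on the order of the cluster diameter. The hard part will be the \emph{boundary case}: $p$ and $q$ lie in different clusters but their Euclidean distance is much smaller than the cluster diameter, in which case the intra-cluster detour to the representatives can exceed~$|pq|$. I plan to handle this by augmenting~$T$ with a constant number of short edges between each pair of geometrically neighbouring clusters (for example, a closest-pair edge across each adjacent cell boundary); one has to verify that these extra edges still admit a tree-like inter-cluster structure, so that the $\bigO(k)$ tree-width bound and the bounded degree are preserved while a short direct route exists for any close cross-cluster pair.
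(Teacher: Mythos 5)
The high-level shape of your construction is close to the paper's (the paper also interpolates between the \MST{} and the greedy spanner via a decomposition into $\Theta(k^{d/(d-1)})$ versus $\Theta(n/k^{d/(d-1)})$-sized pieces, glues local structures through representatives, and bounds the tree-width of the greedy part via the separator result of Le and Than), but the step you yourself flag as ``the hard part'' is a genuine gap, and the patch you sketch does not close it. If $p\in C_i$ and $q\in C_j$ with $|pq|$ far smaller than the cluster diameter, routing $p\to c_i\to c_j\to q$ costs $\Theta(|pc_i|+|qc_j|)$, and $|pc_i|$ is bounded only by $\operatorname{diam}(C_i)$, not by any function of $|pq|$; since the minimum interpoint distance can be arbitrarily small relative to the cell size, the dilation of such a pair is unbounded. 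Adding one closest-pair edge $\{a,b\}$ across each adjacent cell boundary does not help: $p$ and $q$ may lie far (within their cells) from $a$ and $b$ while $|pq|$ is tiny, so the route $p\to a\to b\to q$ is still an uncontrolled detour. Worse, if you add an edge between \emph{every} pair of geometrically neighbouring clusters, the inter-cluster pattern is no longer a tree but (for grid cells) a $d$-dimensional grid on $N=\Theta(n/k^{d/(d-1)})$ nodes, whose tree-width is $\Theta(N^{1-1/d})$; for small $k$ this far exceeds $k$, so the patch also destroys the tree-width bound. (The paper's lower-bound section shows this is not an artefact: a grid-like inter-cluster structure genuinely forces large tree-width.)

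The paper escapes both problems by clustering along the \MST{} rather than geometrically, and by swapping the roles you assign to the two ingredients: it cuts the \MST{} into $m=\Theta(k^{d/(d-1)})$ subtrees of size $\bigO(n/m)$ each, keeps those subtrees as the intra-cluster graphs, and puts the greedy spanner on the $\bigO(m)$ representatives (tree-width $\bigO(m^{1-1/d})=\bigO(k)$; attaching one tree per representative does not increase it). The point of using \MST{} pieces is the classical property that every edge on the \MST{} path between $p$ and $q$ has length at most $|pq|$; combined with a pruning step that keeps, inside each subtree, only edges no longer than the removed boundary edges, the walk from $p$ to its representative uses $\bigO(n/m)$ edges each of length $\le|pq|$. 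That is exactly the quantitative control on the ``detour to the representative'' that your geometric clustering cannot provide, and it is what turns the boundary case into a routine $\bigO(n/k^{d/(d-1)})\cdot|pq|$ estimate. To repair your proof you would need to replace the fair-split/grid clustering by a clustering with this \MST-path property, at which point you essentially arrive at the paper's construction.
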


We complement our result with a matching lower bound. 

\begin{restatable}{theorem}{lowerHigherDim}
\label{thm:lower higher dim}
Let $d\geq 2$ be a fixed integer. For positive integers $n$ and $k \leq {n^{(d-1)/d}\cdot (5d)^{1/d-2}}$, there is a set of $n$ points in $\mathbb{R}^d$, so that every geometric spanner of tree-width $k$ on this set has dilation $\Omega (n/k^{d/(d-1)})$.
\end{restatable}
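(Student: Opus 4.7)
My approach is to prove the lower bound on a concrete hard instance, the $d$-dimensional unit grid with $n \approx m^d$ points for $m = \lfloor n^{1/d}\rfloor$. Given any tree-width-$k$ spanner $G$ of this set, I would invoke the standard balanced-separator property: $G$ admits a vertex set $S\subseteq V(G)$ with $|S|\leq k+1$ whose removal splits $V(G)\setminus S$ into sides $V_1, V_2$ with $|V_i|\leq 2n/3$ and no $G$-edges across.

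The geometric core of the argument is that the $\leq k+1$ points of $S$ cover only limited Euclidean volume: the union of Euclidean balls of radius $r$ around $S$ contains at most $\bigO(kr^d)$ grid points. Choosing $r$ to be a small constant times $(n/k)^{1/d}$, the set $F$ of grid points at Euclidean distance $>r$ from every vertex of $S$ has $|F|=\Omega(n)$, and the two pieces $F_1 := F\cap V_1$ and $F_2 := F\cap V_2$ remain of size $\Omega(n)$ each. By the discrete vertex-isoperimetric inequality on the $d$-dimensional grid, $\Omega(n^{(d-1)/d})$ grid-neighbor pairs straddle the $F_1/F_2$ partition; pick any such $(p,q)$. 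Since every $p$--$q$ walk in $G$ must visit some $s\in S$, the triangle inequality applied to the Euclidean-weighted path gives $d_G(p,q)\geq |ps|+|sq|\geq 2r$, and with $|pq|=1$ this already proves a dilation of $\Omega((n/k)^{1/d})$.

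The main obstacle is that this single-separator bound carries the exponent $1/d$ on $n/k$, whereas the theorem claims the (stronger, for small $k$) bound with exponent $d/(d-1)$. To close the gap I would iterate the balanced-separator argument recursively inside each side $V_i$, which as an induced subgraph of $G$ still has tree-width $\leq k$, building a separator hierarchy of depth $\bigO(\log n)$. A grid-neighbor pair that is close in Euclidean distance but separated only at a deep level of the hierarchy is forced to detour at multiple scales; balancing the radii of each level's separator balls against the isoperimetric boundary at that scale should amplify the dilation bound to the target $\Omega(n/k^{d/(d-1)})$. The hardest part will be the exact accounting across the recursion levels so that the exponents line up --- in particular, calibrating the radius $r_\ell$ used at level $\ell$ against the sub-grid volume on which we apply isoperimetry, so that the cumulative detour matches $n/k^{d/(d-1)}$ rather than only $(n/k)^{1/d}$.
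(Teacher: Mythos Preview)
Your approach has a genuine gap at the very first step: the full $m^d$ integer grid with $n=m^d$ points is \emph{not} a hard instance for this lower bound. Already for $d=2$ and $k=1$, take the serpentine Hamiltonian path on the $m\times m$ grid (row $0$ left-to-right, row $1$ right-to-left, and so on). This is a tree, hence has tree-width $1$, and a direct check shows that any two grid points $p,q$ satisfy $d_G(p,q)\le 2m\cdot|pq|$, so the dilation is $O(m)=O(\sqrt{n})$. Thus on your point set no argument---single separator, recursive, or otherwise---can ever establish a dilation lower bound beyond $O(\sqrt{n})$, whereas the theorem demands $\Omega(n)$ when $k=1$. Your one-shot bound $\Omega((n/k)^{1/d})$ is essentially the truth for this instance; the recursion you sketch cannot amplify it, because there is nothing stronger to extract from the point set.

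The paper uses a different instance, and that choice is really the heart of the proof. It takes a \emph{small} grid on only $(h+1)^d$ vertices, with $h+1\approx k^{1/(d-1)}$ chosen so that the $(h+1)^d$-grid already has tree-width $\Theta((h+1)^{d-1})>k$, and then subdivides every grid edge by $m\approx n/k^{d/(d-1)}$ equally spaced collinear points, for a total of roughly $n$ points. If a spanner $G$ on this set had dilation $o(m)$, then between any two consecutive subdivision points (Euclidean distance $1$) there would be a $G$-path of length $<m/4$, which is forced to stay inside a small box around that grid edge and away from every other grid edge. Concatenating these short paths along each grid edge and contracting them exhibits the $(h+1)^d$-grid as a minor of $G$, forcing $\tw(G)>k$. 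So the argument is a minor-containment argument rather than a separator argument, and the exponent $d/(d-1)$ comes from calibrating the side length of the small grid against $k$ via the tree-width of the $d$-dimensional grid, not from any recursion on balanced separators.
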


This proves that our algorithm for \cref{thm:upper bound higher dim} has a worst-case optimal trade off between tree-width and dilation. We therefore now turn our attention to bounded tree-width spanners with additional property. The most commonly studied property of geometric spanners is planarity.

For point sets in the Euclidean plane, we can adapt the construction of the spanner for \cref{thm:upper bound higher dim} to obtain a planar spanner with bounded tree-width (and bounded degree), that is not necessarily plane. 
As a general tool to obtain  bounded tree-width plane spanners, we provide the following strong dependency between tree-width and the number of edges in sparse connected planar graphs:

 \begin{restatable}{theorem}{twtreeaddedges}
 \label{thm:twtreeaddedges}
 Let $G$ be a planar connected graph with $n$ vertices and $n+\lfloor (k-1)^2/72 \rfloor$ edges, where $k\geq 2$ is an integer. The tree-width of $G$ is at most $k$. 
 \end{restatable}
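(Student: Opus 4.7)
The plan is to reduce $G$ to a much smaller planar multigraph whose tree-width is governed by a classical planar separator bound. Set $\ell := \lfloor (k-1)^2/72 \rfloor$, so that $|E(G)| - |V(G)| = \ell$.

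First, I would iteratively strip off pendant vertices of $G$ to obtain its 2-core $C$. Each removal decreases both the vertex and edge count by one and therefore preserves the cyclomatic surplus $|E|-|V| = \ell$, so $|E(C)| = |V(C)| + \ell$. A tree decomposition of $C$ extends to one of $G$ without ever exceeding width $\max(\tw(C),1)$ by appending a short chain of size-$2$ bags for each tree hanging off $C$, so $\tw(G) = \tw(C)$ (the case of empty $C$, where $G$ is a tree, is trivial).

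Next, I would smooth every vertex of degree exactly $2$ in $C$, producing a planar multigraph $C'$ of minimum degree at least $3$. Smoothing also preserves $|E|-|V|$, so $|E(C')| = |V(C')|+\ell$, and the handshake inequality $2|E(C')| \geq 3|V(C')|$ then forces $|V(C')| \leq 2\ell$. Since $C$ is a subdivision of $C'$, for every subdivision vertex $x$ inserted on an edge $uv$ of $C'$ I can attach a new size-$3$ bag $\{u,x,v\}$ to any bag of $C'$ that already contains $u$ and $v$; repeating this along each subdivided edge yields $\tw(C) \leq \max(\tw(C'),2)$.

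Finally, I would invoke the standard consequence of the Lipton--Tarjan planar separator theorem: every planar graph $H$ satisfies $\tw(H) \leq 6\sqrt{|V(H)|}$. Combined with $|V(C')| \leq 2\ell \leq (k-1)^2/36$, this chains to
\[
\tw(C') \;\leq\; 6\sqrt{|V(C')|} \;\leq\; 6 \cdot \frac{k-1}{6} \;=\; k-1,
\]
and hence $\tw(G) = \tw(C) \leq \max(k-1,2) \leq k$ for $k \geq 2$.

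The main obstacle is making the constant in the planar tree-width inequality small enough that the denominator $72$ in the statement is indeed the correct threshold: the factor $2$ lost in the degree-$3$ bound on $|V(C')|$ and the square in the planar separator estimate together dictate a constant of at most $6$, so careful bookkeeping in the recursive separator construction (or a direct appeal to a quantitative planar tree-width bound) is required. Handling the self-loops and parallel edges that can appear during smoothing is routine, since they do not affect the tree-width of the underlying simple graph and the subdivision-to-decomposition argument goes through unchanged.
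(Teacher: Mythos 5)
Your proof is correct, and while it rests on the same underlying idea as the paper's proof (the tree-width of a connected planar graph is governed by the size of a minimum-degree-$3$ core, which is linear in the cyclomatic number $|E|-|V|$, after which the $\bigO(\sqrt{n})$ planar tree-width bound finishes the job), your implementation is genuinely more elementary. The paper develops a whole theory of \emph{minor-3-cores}: an edge-maximal minor of minimum degree $3$, its uniqueness via the disjoint-paths property, a pruning characterisation, an inductive size bound (\cref{lem:3coresize}, giving the same $2\ell$ vertices you get), and a separate argument (\cref{lem:3coretw}) that contracting low-degree vertices preserves tree-width. Your multigraph $C'$ obtained by taking the $2$-core and smoothing degree-$2$ vertices is exactly that object, but you bound its size in one line with the handshake lemma ($2|E| \geq 3|V|$ plus $|E|-|V|=\ell$), and, more importantly, you prove the direction actually needed for the theorem --- $\tw(G) \leq \max(\tw(C'),2)$ --- by an explicit lifting of a tree decomposition along pendant trees and subdivision paths, rather than the paper's detour through "the core has the \emph{same} tree-width as $G$ when $\tw(G)\geq 3$". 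This sidesteps the uniqueness machinery entirely (which the paper needs for other reasons but not for this theorem) and handles the degenerate cases ($C'$ empty, self-loops, parallel edges) cleanly. The only point to be careful about is the exact constant in the planar tree-width bound: you quote $6\sqrt{|V|}$ where the paper uses $6\sqrt{|V|}+1$ from Robertson--Seymour, but your slack ($\tw(C')\leq k-1$ versus the needed $\leq k$) absorbs the additive $+1$, so the bookkeeping you flag as the main obstacle does in fact close.
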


 This result is quite surprising, since for non-planar graphs, arbitrarily placed edges increase the tree-width linearly in the number of added edges, as there are graphs with $m$ edges and tree-width at least $m\cdot \eps$ for a constant $\eps$ \cite{Grohe2009}. 

 It further yields an alternative proof for \cref{thm:upper bound higher dim} for $d=2$, creating a bounded tree-width spanner with additional properties: 

 \begin{restatable}{corollary}{corPlaneTreewidthSpanner}
\label{cor:PlaneTreewidthSpanner}
Given a set $P\subset \mathbb{R}^2$ of $n$ points and some positive integer $k \leq 12 \sqrt{n-3}$, a plane spanner with tree-width $k$, maximum vertex degree $4$ and dilation $\bigO(n/k^2)$ can be constructed in $\bigO(n \log n)$ time. 
 \end{restatable}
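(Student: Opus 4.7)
The plan is to realise the (non-plane) spanner of Theorem~\ref{thm:upper bound higher dim} for $d=2$ as a plane graph and then deduce the tree-width bound by invoking Theorem~\ref{thm:twtreeaddedges}, while the dilation follows from a two-dimensional instantiation of the block/shortcut analysis that supports Theorem~\ref{thm:upper bound higher dim}. First I would construct, in $\bigO(n\log n)$ time, a plane spanning path $\pi = p_1\,p_2\,\cdots\,p_n$ of $P$ along a locality-preserving ordering (for instance, a Hilbert-curve traversal of a dyadic subdivision of the bounding box of $P$), so that consecutive points on $\pi$ lie in nearby cells of a fine grid and the path can be drawn in the plane without self-intersections. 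The spanning tree $T = \pi$ has $n-1$ edges and maximum degree $2$.

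Next, I would partition the point set into $\Theta(k^2)$ blocks of $\bigO(n/k^2)$ consecutive path-vertices, pick one representative per block, and draw a plane \emph{shortcut backbone} of at most $\lfloor (k-1)^2/72 \rfloor$ additional edges on one side of $\pi$, connecting representatives in the same hierarchical manner as in the proof of Theorem~\ref{thm:upper bound higher dim}. By choosing the representatives so that every vertex receives at most $2$ new edges, the maximum degree of the final graph stays at most $4$. The resulting plane connected graph has at most $n + \lfloor (k-1)^2/72 \rfloor$ edges, so Theorem~\ref{thm:twtreeaddedges} immediately yields tree-width at most $k$; the hypothesis $k\leq 12\sqrt{n-3}$ is exactly what is needed for this edge count not to exceed the planar limit $3n-6$.

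For the dilation, any two points $p,q\in P$ are connected by a graph path that traverses at most $\bigO(n/k^2)$ edges of $\pi$ inside their respective blocks and $\bigO(k)$ shortcut edges between them; the locality of the ordering bounds the geometric length of such a path by $\bigO(n/k^2)\cdot |pq|$, giving dilation $\bigO(n/k^2)$. The running time $\bigO(n\log n)$ comes from the sorting step needed for the Hilbert-style ordering; the remaining operations add $\bigO(k^2)=\bigO(n)$ edges in linear time.

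The main obstacle lies in the construction of the plane shortcut backbone: unlike the construction of Theorem~\ref{thm:upper bound higher dim}, it must avoid crossings both with itself and with $\pi$ while still providing $\bigO(k)$-hop reach among the $\Theta(k^2)$ blocks and respecting the degree-$2$ budget at each representative. Showing that such a backbone exists for \emph{every} point set, and that the corresponding graph paths realise the promised $\bigO(n/k^2)$ dilation uniformly (i.e.\ also for pairs inside the same block and for pairs in blocks that are geometrically close but far apart along $\pi$), is the crux of the argument and requires a careful interplay between the locality-preserving ordering and the geometry of the representatives.
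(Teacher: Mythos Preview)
Your approach has a genuine gap in the dilation argument that cannot be patched within the Hilbert-curve framework. For two points $p,q$ in the same block, you need the $\bigO(n/k^2)$ edges of $\pi$ between them to each have length $\bigO(|pq|)$; a locality-preserving order gives no such guarantee. Indeed, along any space-filling curve there are pairs $p,q$ that are arbitrarily close in the plane but separated by many long edges on $\pi$, so the detour inside a single block is already unbounded relative to $|pq|$. (You flag this as ``the crux'', but it is not a technical detail---it is the whole dilation bound, and it fails.) A second, independent problem is that connecting arbitrary points by straight segments in Hilbert order is \emph{not} a plane drawing in general, so neither the planarity of $\pi$ nor the existence of a crossing-free ``one side'' for the shortcut backbone is available.

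The paper sidesteps both issues by never building its own geometry: it starts from an off-the-shelf plane $t$-spanner $\mathcal{S}$ of maximum degree~$4$ (Kanj--Perkovi\'c--T\"urko\u{g}lu), takes the minimum spanning tree of $\mathcal{S}$, splits that tree into $\Theta(k^2)$ subtrees, and reconnects subtrees using only edges of $\mathcal{S}$. The output is a \emph{subgraph} of $\mathcal{S}$, so planarity and degree~$4$ are free. The dilation comes from Lemma~4 of Aronov et al.: in the MST of any $t$-spanner, every edge on the tree path between $p$ and $q$ has length at most $t\,|pq|$---precisely the per-edge bound your Hilbert path lacks. The edge count is then bounded as you suggest and Theorem~\ref{thm:twtreeaddedges} finishes the tree-width bound.
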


This follows immediately using an adapted version of the algorithm provided in \cite{Aronov2008}: Instead of the Delaunay triangulation, use a plane constant dilation spanner with maximum vertex degree $4$, as provided in \cite{kanj.2017} and the MST of this spanner instead of the EMST.

In the last section, we answer an open question stated in~\cite{Aronov2008} by giving some smaller results for spanners of bounded tree-width on points in convex position. In particular, we show that, for point sets $P$ which are equally spaced on a circle, there is a tree spanner with dilation $\leq \frac{2n}{\pi}+\frac{\pi}{2n}$. This complements a lower bound given in~\cite{Aronov2008}.
Further, since plane spanners on points in convex position are outerplanar, there is a $1.88$-spanner of tree-width $2$ for $P$. Further, we give a (straight-forward) lower bound: There are sets of points in convex position on which every spanner of tree-width $2$ has dilation at least $1.43$.

\section{Preliminaries}

We start giving some preliminary definitions to clarify the notations we use later on. 

In this paper, all graphs $G$ are simple and undirected. We refer to the vertex set of $G$ by $V(G)$ and to the edge set of $G$ by $E(G)$. Euclidean graphs are complete and weighted and the edge weight is given by the Euclidean distance of the incident vertices. 
We denote the \emph{degree} of a vertex $v \in V(G)$ by $\Delta_G(v)$ and the \emph{maximum vertex degree} of a graph by $\Delta(G)$. By $K_n$ we denote the complete graph on $n$ vertices.

A \emph{Euclidean $2$-dimensional $n \times n$-grid} is defined as a graph on the point set $\{0, \dots , n-1\}\times \{0, \dots ,n-1\}$ in $\mathbb{R}^2$. We call two vertices $(x,y)$ and $(x',y')$ \emph{neighbouring}, if $\vert x-x'\vert + \vert y-y'\vert = 1$. The edge set of the $n \times n$ grid is then $\{\{(x,y),(x',y')\} \mid (x,y) \text{ and } (x',y') \text{ are neighbouring}\}$.

A \emph{Euclidean $d$-dimensional $n^d$-grid} is defined as a graph on the point set $\{0, \dots , n-1\}^d$. Here, two vertices $(x_1,\dots,x_d)$ and $(x_1',\dots ,x_d')$ are called \textit{neighbouring} if $\vert x_1-x_1'\vert +\dots  +\vert x_d-x_d'\vert = 1$. The edge set of the $n^d$ grid is then  $\{\{(x_1,\dots x_d),(x_1',\dots,x_d')\} \mid (x_1,\dots x_d) \text{ and } (x_1',\dots x_d') \text{ are neighbouring}\}$.

We use a standard definition of tree-width~\cite{RS86} in the following notation: 
A \emph{tree decomposition} $(T,X)$ is a tree $T$ over a set of nodes $I$ and a mapping $X: I \rightarrow 2^V$, where $X(i)$ is called the \emph{bag} of $i$, so that the following conditions hold:
\begin{enumerate}
\item $\bigcup_{i\in I}X(i)=V$
\item $\bigcup_{i\in I}E(i)=E$ where $E(i)=\{\{u,v\}\in E \mid u,v \in X(i)\}$
\item If $v\in X(i)$ and $v\in X(k)$ for some $i,k\in I$ then $v\in X(j)$ holds for every $j$ on the path in $T$ between $i$ and $k$.
\end{enumerate} 
The \textit{width} of a tree decomposition is defined as $\max_{ i\in I} \vert X(i)\vert-1$. The \textit{tree-width} $\tw(G)$ of $G$ is the minimum width over all tree decompositions of $G$.

A minor of a graph can be obtained by subgraph operations and edge contractions. As, however, edge contractions in general do not maintain any properties of the embedding, which is essential for geometric graphs, we give the definition of contracting one vertex to another: 
For a graph $G$, \textit{contracting vertex} $u$ to $v$ or equivalently \textit{contracting vertex} $u$ along $e$, we create a minor $G'=(V',E')$ of $G$ with $V'=V\setminus \{u\}$ and $E'=\{\{u',v'\} \mid u',v'\in V', \{u',v'\} \in E\} \cup \{\{u',v\}\mid u'\neq v, \{u',u\}\in E\}$. Note that this is just another way of defining an edge contraction. 
Given a graph $G=(V,E)$, some vertex $v\in V$ and a path $\gamma=v,v_2,...,v_\ell$ inside $G$, we define the \emph{path contraction} of $\gamma$ to $v$ to be the series of vertex contractions starting with the contraction of $v_2$ to $v$ up to the contraction of $v_\ell$ to $v$.
A graph $H$ is a \emph{minor} of $G$ if it can be obtained by vertex contractions and subgraph operations. A class of graphs $\mathcal{G}$ is considered \emph{minor closed} if every minor $H$ of a graph $G\in \mathcal{G}$ is also in $\mathcal{G}$.

A \emph{$t$-balanced separator} of size $s$ of a graph $G$ is an edge set $S\subseteq V(G)$, $|S|=s$ such that $V-S$ can be partitioned into two sets $A$ and $B$ with 
no edges between the vertices of $A$ and $B$ and $\vert A\vert\leq t\cdot \vert V\vert$ as well as $\vert B\vert \leq t\cdot \vert V\vert$. If $t=\frac{2}{3}$, we call $S$ a \emph{separator} of $G$.

\section{Bounded tree-width spanners in higher dimensions}\label{sec:general}

In this section, we present an algorithm for constructing bounded tree-width spanners for points in $\mathbb{R}^d$ for constant dimension $d$. As we will see later on, the trade-off between tree-width and dilation of this spanner is worst-case optimal. 

The algorithm makes use of two ingredients: a Euclidean minimum spanning tree (\MST) and a class of spanners with sublinear tree-width. For $d=2$ a class of planar spanners can be used (and will be used in Section~\ref{sec:planar}). For general $d$ we can use the greedy spanner~\cite{Althfer1993} instead.

To bound the tree-width of the greedy spanner, we make use of separators.
The \textit{separating-number} $\sn(G)$ of a graph $G$ as the smallest integer $s$ such that every subgraph of $G$ has a separator of size $s$. 

\begin{lemma}[\cite{Dvok2019}]
\label{lem:separatingnumber}
The tree-width of any graph $G$ is at most $15\sn(G)$.
\end{lemma}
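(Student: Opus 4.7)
The plan is to recursively construct a tree decomposition using balanced separators. To make the induction go through, I would prove the stronger statement: for every graph $G$ with $\sn(G)\leq s$ and every vertex set $W\subseteq V(G)$ with $|W|\leq c\cdot s$ for an appropriate constant $c$, there is a tree decomposition of $G$ whose root bag contains $W$ and in which every bag has at most $15s+1$ vertices. Carrying the boundary $W$ through the recursion is essential, since each subproblem inherits a set of vertices that must all appear together in its root bag.

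For the inductive step I would apply the hypothesis $\sn(G)\leq s$ to obtain a $\frac{2}{3}$-balanced separator $S\subseteq V(G)$ with $|S|\leq s$, partitioning $V(G)\setminus S$ into sides $A,B$ with $|A|,|B|\leq \frac{2}{3}|V(G)|$. Let $X_0 := W\cup S$ be the root bag. I would then recurse on the induced subgraphs $G[A\cup S]$ and $G[B\cup S]$ with new boundary sets $W_A := (W\cap A)\cup S$ and $W_B := (W\cap B)\cup S$, attaching the resulting subdecompositions as children of $X_0$. The tree-decomposition axioms follow because no edge of $G$ crosses between $A$ and $B$, and $S$ appears in both the root bag and the boundary bag of each child, preserving the connectedness property. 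Since $\sn$ is hereditary under subgraphs, the induction hypothesis applies to the recursive subproblems, and the recursion terminates because $|V(G)\setminus W|$ strictly decreases at each step.

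The main obstacle is keeping the boundary sizes $|W_A|, |W_B|$ below $cs$. A single balanced separator partitions $V(G)$ but not necessarily $W$ evenly, so naively $|W_A|$ could be as large as $|W|+|S|$, and the recursion would degrade to bag size $\Theta(s\log n)$ rather than $O(s)$. The standard remedy is a weighted version of the separator statement: assign the vertices of $W$ large weight and the remaining vertices weight zero, and pick a separator that is balanced with respect to this weight. Such a separator, still of size at most $s$ (obtained by applying the separation-number hypothesis to a suitable auxiliary subgraph in which $W$-vertices are duplicated), ensures $|W\cap A|, |W\cap B| \leq \frac{2}{3}|W|$, so that $|W_A|, |W_B| \leq \frac{2}{3}cs + s$. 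Any constant $c\geq 3$ then closes the induction; a refined accounting together with the handling of the base case of small graphs yields the specific constant $15$ claimed in the statement.
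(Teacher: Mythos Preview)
The paper does not give its own proof of this lemma; it is quoted verbatim from Dvo\v{r}\'ak and Norin and used as a black box. So there is no ``paper's proof'' to compare against, only the original source.

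Your overall framework---building a tree decomposition recursively while carrying a boundary set $W$---is the standard one, and you have correctly isolated the crux: a vertex-balanced separator of $G$ need not balance $W$, and without control of $|W_A|,|W_B|$ the recursion only yields width $\bigO(s\log n)$, as you yourself note.

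The gap is in your proposed remedy. You want a $W$-weighted balanced separator of size at most~$s$, and you propose to obtain it ``by applying the separation-number hypothesis to a suitable auxiliary subgraph in which $W$-vertices are duplicated''. But duplicating vertices of $W$ (or hanging many pendants on them) produces a \emph{supergraph} of $G$, not a subgraph; the hypothesis $\sn(G)\leq s$ says nothing about separators of supergraphs. Concretely, if $G^{+}$ is $G$ with $M$ pendants attached to each vertex of $W$, then a $\tfrac{2}{3}$-balanced separator of $G^{+}$ would, for large $M$, have to balance~$W$---but there is no reason for $\sn(G^{+})\leq s$ to hold, and in general it does not. (The duplication trick \emph{does} work for hereditary separator theorems such as the planar one, because planarity is preserved under adding pendants; ``being a subgraph of $G$'' is not.) So the step that turns vertex-balanced separators into weight-balanced ones is unjustified, and without it the constant blows up to $\Theta(\log n)$.

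Dvo\v{r}\'ak and Norin's actual argument obtains the $W$-balanced separation by a different, more delicate route (choosing a separation of order at most $s$ that is extremal with respect to how it splits $W$, and deriving a contradiction from the balanced-separator hypothesis applied to a carefully chosen subgraph); the constant $15$ comes out of that analysis rather than from the naive recursion you sketch.
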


Recently, Le and Than~\cite{Le2024} proved that subgraphs of the greedy spanner have small separators. This generalizes a corresponding result in two dimensions~\cite{EppsteinK21}.
\begin{lemma}[\cite{Le2024}]
\label{lem:greedysep} 
Let $P\subset \mathbb{R}^d$ be a set of points and $G$ be the greedy $t$-spanner of $P$ for $t\in (1,3/2]$. Every subgraph of $G$ on $k$ vertices has a $(1-\frac{1}{\eta_d2^{d+1}})$-balanced separator of size $c_{d} (t-1)^{1-2d} k^{1-1/d}$, where $c_{d}\leq 2^{\bigO(d)}$ is a constant and $\eta_d$ is the packing constant of $d$-dimensional Euclidean space.
\end{lemma}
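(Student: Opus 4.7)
The plan is to combine two ingredients: a scale-by-scale packing property of greedy $t$-spanners, and a sphere-separator argument in the style of Miller, Teng, Thurston and Vavasis. Since $G$ is a greedy spanner, for every edge $uv \in E(G)$ the shortest $uv$-path in $G \setminus \{uv\}$ has length strictly greater than $t \cdot |uv|$. This condition, combined with $t \leq 3/2$, should imply the leapfrog property, which informally says that edges of similar length cannot be too densely packed in the ambient space without some of them creating a short alternative path and thus contradicting the greedy construction.

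First, I would fix an arbitrary subgraph $H \subseteq G$ on $k$ vertices and bucket its edges by dyadic length scales $E_i = \{ e \in E(H) : 2^i \le |e| < 2^{i+1} \}$. Using the leapfrog property, the key step is to prove the following packing inequality: for any Euclidean ball $B$ of radius $r$, the number of $E_i$-edges with at least one endpoint in $B$ is at most $\bigO((r/2^i)^d \cdot (t-1)^{1-2d})$. The factor $(t-1)^{1-2d}$ arises by bounding, for each long edge, how many shorter edges can be packed nearby without creating an alternative path of stretch $\leq t$. This is where essentially the entire $t$-dependence in the final separator bound is absorbed.

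Next, I would apply a geometric sphere-separator argument to $H$. By the centerpoint theorem there exists $p \in \mathbb{R}^d$ such that every closed half-space through $p$ contains at least a constant fraction of $V(H)$. Considering concentric spheres of radius $\rho$ around $p$ over a suitable range $[\rho_0, 2\rho_0]$ and averaging over $\rho$, the packing inequality bounds the expected number of $E_i$-edges cut by a random such sphere; summing geometrically over the scales $i$ and using $|V(H)| = k$ yields $\bigO((t-1)^{1-2d} k^{1-1/d})$ expected crossings. A radius achieving at most the expected value of the crossing count, while still guaranteeing that both sides carry at least a $(\eta_d 2^{d+1})^{-1}$-fraction of $V(H)$, then gives the desired separator. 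The packing constant $\eta_d$ appears precisely here because it controls the density of the ball packing used to certify that a sufficiently fat annulus of radii achieves both the small cut and the $(1 - 1/(\eta_d 2^{d+1}))$ balance simultaneously.

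The main obstacle is the refined packing step: deriving the $(t-1)^{1-2d}$ exponent requires carefully tracking, for each edge at scale $i$, how many other $E_i$-edges it can co-locate with without inducing a $t$-short alternative path, then feeding this into a standard $d$-dimensional packing argument. Once that packing inequality is in place, the averaging over sphere radii and the centerpoint-based balancing are reasonably standard; the only remaining technicality is verifying that the chosen radius range is wide enough that \emph{both} the cut bound and the balance bound hold for the same $\rho$, which is what forces the precise constants $\eta_d$ and $2^{d+1}$ to enter the statement.
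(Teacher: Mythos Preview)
The paper does not prove this lemma. It is stated with the citation \cite{Le2024} and used as a black box; the authors immediately move on to deriving \cref{lem:greedytw} from it. So there is no ``paper's own proof'' to compare your proposal against.

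Your sketch is a plausible outline of how the result in \cite{Le2024} is obtained (packing of greedy-spanner edges at each dyadic scale, then a Miller--Teng--Thurston--Vavasis style sphere-separator averaging argument), but for the purposes of this paper you should simply cite the lemma rather than attempt to reprove it. If your goal was only to fill in what the paper does with this statement, the correct answer is: nothing---it is quoted from prior work.
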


The packing constant of $d$-dimensional Euclidean space is defined as the smallest number $\eta_d$ such that for any $r\in (0,1]$ and any set of points $P\in \mathbb{R}^d$ contained in the $d$-dimensional unit ball, where the minimum distance between any pair of points of $P$ is $r$,  we have that $\vert P\vert \leq \eta_d r^{-d}$. The packing constant can be bounded by $\eta_d = 2^{\bigO(d)}$. We refer to the paper of Le and Than \cite{Le2024} for more details.

Using the fact that subgraphs of the greedy spanner have small separators, we can derive a bound on the tree-width of the greedy spanner. We fix $t=3/2$ in order to obtain a bound independent of $t$.

\begin{lemma}
\label{lem:greedytw}
Let $P\subset \mathbb{R}^d$ be a set of $n$ points and $G$ the greedy $3/2$-spanner of $P$. $G$ has tree-width at most $15 \eta_d c_d 8^d \cdot n^{1-1/d}$, where
$\eta_d$ is the packing constant in $d$-dimensional Euclidean space and $c_{d}\leq 2^{\bigO(d)}$ is a constant.
\end{lemma}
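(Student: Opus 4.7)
The plan is to bound the separating-number $\sn(G)$ and then invoke \cref{lem:separatingnumber}. By definition, it suffices to exhibit, for every subgraph $H$ of $G$ on $k\leq n$ vertices, a $2/3$-balanced separator of size at most $\eta_d c_d 8^d n^{1-1/d}$. Setting $t = 3/2$ in \cref{lem:greedysep} yields an $\alpha$-balanced separator of size at most $c_d 2^{2d-1} k^{1-1/d}$, where $\alpha = 1 - 1/(\eta_d 2^{d+1})$. Since $\alpha$ is much closer to $1$ than $2/3$, the balance has to be boosted by iteration.

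Concretely, I would iteratively grow a separator set $S$: at each round, if some connected component $C$ of $H - S$ has more than $2k/3$ vertices, apply \cref{lem:greedysep} to the subgraph of $G$ induced on $C$ and add the returned separator to $S$. Because $2k/3 > k/2$, at most one component can ever exceed $2k/3$, and after separating it each resulting piece has size at most $\alpha \cdot |C|$. Inductively, after $i$ rounds the (unique, if present) large component has size at most $\alpha^i k$. The inequality $-\ln(1-x) \geq x$ gives $\ln(1/\alpha) \geq 1/(\eta_d 2^{d+1})$, so $i = \lceil \ln(3/2)\cdot \eta_d 2^{d+1} \rceil \leq \eta_d 2^{d+1}$ rounds suffice to achieve $\alpha^i \leq 2/3$. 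The total number of vertices added to $S$ is then at most $\eta_d 2^{d+1} \cdot c_d 2^{2d-1} k^{1-1/d} = \eta_d c_d 8^d k^{1-1/d} \leq \eta_d c_d 8^d n^{1-1/d}$.

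Once all components of $H - S$ have size at most $2k/3$, a standard bin-packing argument splits them into two sides of size at most $2k/3$ each: if the largest component has size at least $k/3$, place it alone on one side and all others on the other; otherwise every component is smaller than $k/3$ and the greedy rule ``add to the lighter bin'' keeps both sides below $2k/3$. Hence $\sn(G) \leq \eta_d c_d 8^d n^{1-1/d}$, and \cref{lem:separatingnumber} yields $\tw(G) \leq 15\,\eta_d c_d 8^d\, n^{1-1/d}$, as claimed. The main obstacle is the boosting step together with the bookkeeping that makes the constants line up: the product $\eta_d 2^{d+1} \cdot 2^{2d-1}$ must collapse to $\eta_d 2^{3d} = \eta_d 8^d$, which is why the choice $t = 3/2$ is exactly the right one to absorb the dependence on $t$ into the target constant.
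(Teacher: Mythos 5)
Your proposal is correct and follows essentially the same route as the paper: apply \cref{lem:greedysep} with $t=3/2$, boost the balance from $1-1/(\eta_d 2^{d+1})$ to $2/3$ by iterating on the largest component for at most $\eta_d 2^{d+1}$ rounds, bound the accumulated separator by $\eta_d c_d 8^d\, k^{1-1/d}$, and conclude via \cref{lem:separatingnumber}. Your explicit bin-packing step for splitting the remaining components into two balanced sides is a detail the paper leaves implicit, but it is not a different method.
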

\begin{proof}
To bound the tree-width of $G$, we show that every subgraph of $G$ has a separator of size $\eta_d c_d 8^d \cdot n^{1-1/d}$ and therefore bound the separating-number of $G$. Let $H$ be a subgraph of $G$ on $k$ vertices. By~\cref{lem:greedysep} there is a $(1-\frac{1}{\eta_d2^{d+1}})$-balanced separator of size $c_{d} 2^{2d-1} k^{1-1/d}$ for $H$. To obtain a separator for $H$ we can iteratively combine separators for the currently largest remaining connected component, induced by the current separator, until the largest remaining component is at most of size $2/3\cdot k$. It is not hard to see that after $i$ iterations the largest induced connected component is of size at most $\left( 1-\frac{1}{\eta_d2^{d+1}}\right)^i\cdot k$. We claim that after $\eta_d2^{d+1}$ iterations the largest remaining induced component is of size $\leq 2/3\cdot k$. Since $\eta_d2^{d+1} \geq 2$, we have that
$$\left(1-\frac{1}{\eta_d2^{d+1}} \right)^{\eta_d2^{d+1}}\leq e^{-1} \leq 2/3,$$
which proves the claim. The size of the resulting separator can be bounded by
\begin{align*}
\sum_{j=0}^{\eta_d2^{d+1}}  c_{d} 2^{2d-1} \cdot \left(k\cdot \left(1-\frac{1}{\eta_d2^{d+1}}\right)^j\right)^{1-1/d}\\
\leq \eta_d  c_d  2^{d+1}  2^{2d-1} \cdot k^{1-1/d} = \eta_d c_d 8^d \cdot k^{1-1/d}.
\end{align*}
Since every subgraph of $G$ has a separator of size at most $\eta_d c_d 8^d \cdot n^{1-1/d}$, we can bound the separating-number of $G$ by $\eta_d c_d 8^d \cdot n^{1-1/d}$ and consequently by~\cref{lem:separatingnumber} we can conclude that the tree-width of $G$ is $\leq 15\cdot \eta_d c_d 8^d \cdot n^{1-1/d}$.
\end{proof}

The following algorithm computes a bounded tree-width spanner. The constant $C$ is defined as $C=30\eta_dc_d8^d$.

\begin{algorithm}[H] \caption{$d\texttt{-DimensionalBoundedTreeWidthSpanner}(P,k)$}
\label{alg:higherdim}
\begin{algorithmic}[1]
\Require a set of $n$ points $P\subset \mathbb{R}^d$ and a natural number $k\leq n^{1-1/d}$
\Ensure an $\bigO(n\cdot k^{(d-1)/d})$-spanner $G=(P,E)$ of tree-width $k$ 
\State $\mathcal{EMST} \leftarrow $ Euclidean minimum spanning tree of $P$
\If {$k=1$}
  \State { \Return $\mathcal{EMST}$}
\EndIf
\State $m \leftarrow \left\lceil \left(k/C\right)^{d/(d-1)} +1\right\rceil$
\State Compute a set $\mathcal{T}$ of $m$ disjoint subtrees of $\mathcal{EMST}$, each containing $\mathcal{O}(n/m)$ points, as explained in the proof of \cref{lem:subtrees}.
\State For $T \in \mathcal{T}$ let $R(T)$ be the vertices in $T$ incident to edges removed by the previous step.
\State For each $T \in \mathcal{T}$ iterate through its edges $e$ from long to short. If $e$ lies on a path between two vertices in $R(T)$, remove the edge. Let $E'(T)$ be the set of remaining edges.
\State $E'\leftarrow \bigcup_{T\in\mathcal{T}}E'(T)$
\State $(R, E'') \leftarrow$ greedy $3/2$-spanner for $R = \bigcup_{T\in\mathcal{T}}R(T)$
\State \Return $G=(P,E'\cup E'')$
\end{algorithmic}
\end{algorithm}
We now start with a detailed description on how the subtrees (line 5) are constructed. 
The following is a folklore result and is for instance mentioned in~\cite{Lipton1979}.

\begin{lemma}
\label{lem:vertsep}
For any tree $T$ on $n$ vertices there is a separator vertex that splits $T$ into subtrees each at most of size $n/2$ and can be found in time $\bigO(n)$.
\end{lemma}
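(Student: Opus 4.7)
The plan is to prove this via the standard tree centroid argument. First I would root $T$ at an arbitrary vertex $r$ and perform a DFS from $r$, computing for each vertex $v$ the quantity $s(v)$, the size of the subtree rooted at $v$ (so $s(r)=n$). This is a textbook post-order computation and takes $\mathcal{O}(n)$ time and space in total.

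Next I would locate the desired separator vertex by a greedy descent from $r$. Starting at $v := r$, at each step I examine the children of $v$ in the rooted tree. If some child $c$ satisfies $s(c) > n/2$, I move to $c$ and repeat; otherwise I stop and return $v$. The key observation is that at any vertex at most one child can have subtree size exceeding $n/2$, because the children's subtrees are vertex-disjoint and their sizes sum to $s(v)-1 \leq n-1$, so the descent is well-defined. Since each step moves strictly deeper into $T$, the process halts after at most $n$ steps, which trivially gives an $\mathcal{O}(n)$ bound on the walk (and a more careful accounting, charging each inspected child to the edge above it, gives the same conclusion).

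For correctness of the returned vertex $v^*$: removing $v^*$ from $T$ produces one component for each child of $v^*$, plus (if $v^* \neq r$) an "upward" component containing $r$. By the stopping condition, each child-subtree has size $\leq n/2$. For the upward component, note that if $v^* \neq r$ then at the previous step we moved from the parent $p$ to $v^*$ precisely because $s(v^*) > n/2$; hence the upward component, which has size $n - s(v^*)$, is strictly less than $n/2$. If $v^* = r$, the upward component is empty. Thus every component of $T - v^*$ has size at most $n/2$, which is exactly the required property. No step presents a genuine obstacle — the only point needing a bit of care is handling the upward component separately from the downward child-subtrees, which is precisely what the descent argument is designed to control.
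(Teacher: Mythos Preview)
Your proof is correct. Both you and the paper compute the subtree sizes $s(v)$ by a rooted DFS in $\mathcal{O}(n)$ time, but from there the arguments diverge. The paper first proves existence by contradiction: taking a vertex $v$ that minimises the maximum size of an induced subtree, it shows that if some induced subtree still exceeds $n/2$, then moving to the neighbour of $v$ inside that subtree strictly decreases the maximum, contradicting the choice of $v$. For the algorithm, the paper then simply scans all vertices and tests the conditions $s(u)\leq n/2$ for every child $u$ and $n-s(v)\leq n/2$. You instead give the constructive greedy-descent argument, which proves existence and supplies the algorithm in one stroke. Both routes are standard; yours is arguably cleaner since existence falls out of the descent itself, while the paper's scan-all-vertices approach is marginally simpler to implement and does not require the observation that at most one child can have large subtree size.
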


\begin{proof}
For the sake of a contradiction, suppose there is no vertex in $T$ that splits $T$ into subtrees each at most of size $n/2$. Now let $v$ be a vertex that minimizes the maximum size of the subtrees induced by its removal. We know that there must be a subtree $T'$ so that $\vert T' \vert > n/2$. The sum of the sizes of all the other subtrees $T_1,...,T_k$ induced by the removal of $v$ therefore must be less than $n/2-1$. Let $u$ be the vertex of $T'$ that is connected to $v$. Removing $u$ instead of $v$ leaves us with one subtree of size less than $n/2$, this subtree being the union of $T_1,...,T_k$ and $v$ and possibly several other subtrees, which were a part of $T'$ before and are therefore now each of size strictly smaller than the size of $T'$. This of course contradicts with $v$ being a vertex that minimizes the maximum size of the subtrees induced by its removal. We can therefore conclude that there must be a separator vertex that splits $T$ into subtrees each at most of size $n/2$.

The argument above even lets us deduce an algorithm for finding the separator vertex in question. Start by arbitrarily choosing a root vertex. At this root begin a DFS and recursively compute the sizes of the subtrees rooted at the children of the current vertex. With $s(v)$ we denote the size of the subtree rooted at $v$. If the current vertex $v$ is a leaf we set $s(v)=1$. For any other vertex $s(v)$ is the sum of the sizes of the subtrees rooted at the children of $v$ plus one. Now go through every vertex $v$ and check if for every child $u$ it holds that $s(u)\leq n/2$ and that $n-s(v)\leq n/2$. If $v$ is the separator vertex in question, we know that every subtree induced by the removal of $v$ has size at most $n/2$ and therefore for every child $u$ of $v$ it holds that $s(u)\leq n/2$ and $n-s(v)\leq n/2$. Since $T$ is a tree, the DFS takes time $\bigO(n)$ and since we have to check every vertex only once to find a separator vertex, we end up with a running time of $\bigO (n)$.
\end{proof}

Considering the separator vertex $v$ and its edges, we observe that one of the by removal of $v$ induced subtrees must have size at least $n\cdot 1/\Delta(v)$. Let $e$ be the edge that connects $v$ to this subtree. Then, $e$ separates the tree into one subtree of size at least $n\cdot 1/\Delta(v)$, this being the aforementioned subtree, and one of size at most $n\cdot (\Delta(v) -1)/\Delta(v)$.

\begin{lemma}
\label{lem:edgesep}
For any tree $T$ of degree $\Delta$ with $n$ vertices there is a separator edge, that separates $T$ into two subtrees, one of size at least $n\cdot 1/(\Delta+1)$ and one of size at most $n\cdot \Delta/(\Delta+1)$. This edge can be found in time $\bigO(n)$.
\end{lemma}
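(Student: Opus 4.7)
My plan is to obtain the separator edge by first invoking Lemma~\ref{lem:vertsep} to locate a vertex separator $v$ of $T$, and then choosing, among the edges incident to $v$, the one that leads into the largest subtree of $T - v$.

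More concretely, first I would apply Lemma~\ref{lem:vertsep} to find, in time $\bigO(n)$, a vertex $v$ whose removal decomposes $T$ into subtrees $T_1,\dots,T_{\deg(v)}$ each of size at most $n/2$. Let $T_1$ be a largest of these subtrees; since $\sum_i |T_i| = n-1$ and $\deg(v) \leq \Delta$, we have $|T_1| \geq (n-1)/\deg(v) \geq (n-1)/\Delta$. Let $u$ be the unique neighbour of $v$ in $T_1$, and let $e = \{u,v\}$. Removing $e$ separates $T$ into exactly two components: $T_1$, of size $|T_1|$, and the component $C$ containing $v$, of size $n - |T_1|$.

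It remains to verify the two size bounds. For the lower bound, $|T_1| \geq (n-1)/\Delta \geq n/(\Delta+1)$, since the inequality $(n-1)(\Delta+1) \geq n\Delta$ is equivalent to $n \geq \Delta+1$, which holds as $T$ is a tree of maximum degree $\Delta$ on $n$ vertices (the corner cases $n \leq \Delta$ are easy to check directly, since then $T$ itself is a star or smaller). For the upper bound, $|T_1| \leq n/2 \leq n\Delta/(\Delta+1)$ because $v$ was chosen as a vertex separator from Lemma~\ref{lem:vertsep}, and $(\Delta+1)/2 \leq \Delta$ for every $\Delta \geq 1$; hence the other component $C$ satisfies $|C| = n-|T_1| \geq n/(\Delta+1)$ as well. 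This shows that $e$ is a separator edge with the claimed sizes.

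For the running time, computing $v$ and the sizes of all incident subtrees can be done in $\bigO(n)$ via the DFS already described in the proof of Lemma~\ref{lem:vertsep}, after which selecting the largest subtree and the edge $e$ takes time $\bigO(\deg(v)) = \bigO(n)$ in the worst case. I don't expect a real obstacle here; the only point that requires a moment of care is that the bound stated in the lemma uses $\Delta+1$ rather than the slightly sharper $\Delta$ that the preceding paragraph hints at, but this weaker form follows immediately from $(n-1)/\Delta \geq n/(\Delta+1)$ together with the balance guarantee of the centroid.
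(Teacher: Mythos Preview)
Your proof is correct and follows essentially the same approach as the paper: the paper derives the lemma in the short paragraph immediately preceding it, by taking the centroid vertex $v$ from Lemma~\ref{lem:vertsep} and choosing the edge from $v$ into the largest component of $T-v$. Your write-up is in fact slightly more careful than the paper's, which asserts that the largest component has size at least $n/\Delta(v)$ (rather than $(n-1)/\Delta(v)$); your weakening from $(n-1)/\Delta$ to $n/(\Delta+1)$ via $n\ge\Delta+1$ is the clean way to match the stated bound.
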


This result can be extended as follows:

\begin{lemma}
\label{lem:subtrees}
Given a tree $T$ of degree $\Delta$, that has $n$ vertices and some integer $m\geq 1$, there is a set of $m-1$ edges of $T$, whose removal lead to $m$ disjoint subtrees of size $\bigO(n/m)$. 
This set of edges can be found in time $\bigO(n (\Delta+1)(\log m + \log(\Delta +1) ))$.
\end{lemma}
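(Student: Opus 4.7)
The plan is to iterate the single-edge separator of \cref{lem:edgesep} recursively. Concretely, I would define a routine $\texttt{Split}(T, m)$: if $m=1$, output $\{T\}$; otherwise apply \cref{lem:edgesep} to obtain two subtrees $T_1, T_2$ of sizes $n_1, n_2 \in [n/(\Delta+1),\, n\Delta/(\Delta+1)]$, choose positive integers $m_1, m_2$ with $m_1+m_2=m$ roughly proportional to the subtree sizes (e.g.\ $m_1 = \max(1, \min(m-1, \lfloor m n_1/n \rfloor))$), and return $\texttt{Split}(T_1, m_1) \cup \texttt{Split}(T_2, m_2)$. The final set of removed edges is the union of the edges used across all invocations of the edge separator.

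First I would verify by induction on $m$ that the routine removes exactly $m-1$ edges and produces $m$ pairwise disjoint subtrees whose union covers $V(T)$: each non-base call removes one edge and triggers two subproblems whose $m$-parameters are positive and sum to $m$.

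The size bound is then proven by a second induction on $m$, with hypothesis ``every output subtree of $\texttt{Split}(T, m)$ has size at most $\alpha n/m$'' for a suitable constant $\alpha = \Theta(\Delta)$. The proportional choice yields $m_i \geq m n_i/n - \bigO(1)$, and combined with $n_i \geq n/(\Delta+1)$ this gives $n_i/m_i = \bigO(n/m)$ with a multiplicative constant depending only on $\Delta$; the induction then propagates $\alpha n/m$ through both recursive branches. The main obstacle is to control the rounding of $m_i$ together with the boundary cases where $m \leq \Delta+1$: there the recursion can be very unbalanced (e.g.\ $m_1 = m-1$, $m_2 = 1$) so the proportional estimate no longer helps, and one must instead use the crude bound $n_i \leq n\Delta/(\Delta+1)$ and pick $\alpha$ large enough to absorb this constant overhead directly.

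For the running time, each call $\texttt{Split}(T', m')$ performs an edge separator computation in $\bigO(|T'|)$ time by \cref{lem:edgesep}. Because the subtree sizes at any fixed recursion depth partition a subset of $V(T)$, the total work at each depth is $\bigO(n)$. Since $m_i \leq m\Delta/(\Delta+1) + 1$, the $m$-parameter shrinks by a factor of $\Delta/(\Delta+1)$ per level, giving recursion depth $\bigO((\Delta+1)\log m)$; an additional $\bigO(\log(\Delta+1))$ levels absorb the small-$m$ tail of the recursion where the additive $+1$ dominates. Multiplying the per-level work by the total depth yields the claimed bound $\bigO(n(\Delta+1)(\log m + \log(\Delta+1)))$.
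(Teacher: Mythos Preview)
Your recursion with proportional $m$-allocation is a different route from the paper's, and the inductive proof of the size bound as you sketch it does not close. The step ``$n_i/m_i = \bigO(n/m)$ \ldots\ the induction then propagates $\alpha n/m$'' hides a multiplicative factor strictly larger than $1$: with $m_1 = \lfloor m n_1/n\rfloor$ you only get $m_1 \le m n_1/n$, hence $n_1/m_1 \ge n/m$, and the induction hypothesis applied to $(T_1,m_1)$ yields pieces of size at most $\alpha\, n_1/m_1$, which can exceed $\alpha\, n/m$. This slack recurs at every level of the recursion, so no fixed $\alpha$ makes the induction go through; handling the regime $m\le \Delta+1$ separately does not repair this, because the same loss occurs for arbitrarily large $m$. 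Your algorithm does in fact produce pieces of size $\bigO((\Delta+1)n/m)$, but establishing it needs a different argument --- e.g.\ a telescoping bound along each root-to-leaf path of the recursion tree, summing the unit losses from the floors as a geometric series --- rather than the one-line induction you describe.

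The paper avoids the issue entirely by not allocating $m$ to subproblems. It repeatedly applies the edge separator of \cref{lem:edgesep} to the currently largest remaining subtree until every piece has size at most $(\Delta+1)n/m$; since each piece created this way has size at least $n/m$, at most $m$ pieces result, and one then continues splitting the largest until there are exactly $m$. The size bound is immediate from the stopping rule, with no induction. The running-time analysis is then the same idea as yours: the subtrees at each layer of the binary splitting tree are disjoint, so each layer costs $\bigO(n)$, and the depth is $\bigO((\Delta+1)(\log m + \log(\Delta+1)))$. Your depth argument is essentially fine, except that the small-$m$ tail costs $\bigO(\Delta)$ levels (the $m$-parameter may drop by only $1$ per step once $m\le \Delta+1$), not $\bigO(\log(\Delta+1))$; this is still within the stated bound.
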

\begin{proof}
By Lemma \ref{lem:edgesep} for any tree of size $n$ there is always an edge, whose removal splits the tree into one subtree of size at most $n\cdot \Delta/(\Delta+1)$ and one of size at least $n\cdot 1/(\Delta+1)$. The idea now is to split subtrees until their sizes are sufficiently small. So given a tree on $n$ points and some $m\in\mathbb{N}$, we keep splitting the largest subtrees until there is no subtree of size more than $(\Delta +1)\cdot n/m$ left. Note that this procedure can only lead to subtrees as small as $n/m$ and therefore only to at most $m$ subtrees. To obtain exactly $m$ subtrees we can simply continue splitting the largest remaining subtrees until the number of subtrees is exactly $m$. The subtrees are now at least of size $n/((\Delta +1)\cdot m)$ and at most of size $(\Delta +1)\cdot n/m$ and therefore of size $\mathcal{O}(n/m)$.

Now let us look at the time needed to find this set. Let $\mathcal{T}$ be the set of every subtree that is created by one of the $m-1$ splits, including $T$. Every split of a subtree results in two new subtrees. We can therefore organize the subtrees of $\mathcal{T}$ into a binary tree where $T$ is the root and the two subtrees resulting from the first split are the children of $T$ and so on. It is not hard to observe that the subtrees of the same layer, that being the subtrees of the same depth in the binary tree, are pairwise disjoint and that every leaf in the binary tree is one of the $m$ subtrees that is returned by the above procedure. By Lemma \ref{lem:edgesep} we know that finding a separator edge inside a tree of size $n$ takes time $\bigO(n)$. The running time of the above procedure therefore is $\sum_{T'\in \mathcal{T}}\bigO(\vert T'\vert)$. The maximum number of splits necessary to obtain a subtree of size $n/((\Delta +1)\cdot m)$ is $(\Delta +1)\cdot (\log m + \log(\Delta+1))$. The binary tree therefore has height at most $h=\lceil (\Delta +1)\cdot (\log m + \log(\Delta+1)) \rceil$. Let $\mathcal{T}_i\subseteq \mathcal{T}$ be the set of subtrees in layer $i$ of the binary tree. Since the subtrees of the same layer are pairwise disjoint, we can conclude that for every $i\in \{0, \dots, h\}$ the union $\bigcup \mathcal{T}_i$ has size at most $n$. The running time of the procedure above can therefore be bounded by 

\[ \sum_{T'\in \mathcal{T}}\bigO(\vert T'\vert)=\sum_{i=0}^{h} \sum_{T' \in \mathcal{T}_i}\bigO\left(\vert T'\vert \right)=\sum_{i=0}^{h} \bigO\left( n \right)=\bigO\left( n h\right)=\bigO(n (\Delta+1)(\log m + \log(\Delta +1) )). \]
\end{proof}

We call a vertex a \emph{representative} of a subtree $T  \in \mathcal{T}$ if it was incident to an edge of the \MST\ that was removed in the construction of \cref{lem:subtrees}. 
While bounding the dilation of the spanner requires more work, we can already derive the running time and tree-width.

\begin{lemma}\label{lem:run}
Algorithm \ref{alg:higherdim} constructs in $\bigO(n^2\log n)$ time a graph with tree-width $\leq k$ and bounded degree.
\end{lemma}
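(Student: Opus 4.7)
The plan is to verify three properties of the output graph $G=(P,E'\cup E'')$ separately: tree-width at most $k$, bounded maximum degree, and running time $\bigO(n^2\log n)$. The degree bound and the running time are relatively straightforward; the tree-width bound is the substantive part, and I would obtain it by combining a tree decomposition of the greedy spanner $(R,E'')$ with tree decompositions of the connected components of the forest $(V,E')$.

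The first key observation is that $(V,E')$ is a forest in which every connected component contains at most one representative. This follows from the termination condition of the edge-removal step in line 7: the step keeps deleting an edge of a subtree as long as that edge lies on a path between two vertices of $R(T)$; once this process halts, no remaining edge lies on a rep-to-rep path, which is exactly the condition that each component contains at most one representative. Next I would bound $|R|\leq 2(m-1)$, since each of the $m-1$ edges removed by \cref{lem:subtrees} contributes at most two endpoints to $R$. Plugging $m\leq (k/C)^{d/(d-1)}+2$ and the subadditivity $(a+b)^{1-1/d}\leq a^{1-1/d}+b^{1-1/d}$ into \cref{lem:greedytw} yields an upper bound of the shape $15\eta_dc_d8^d\cdot 2^{1-1/d}(k/C+1)$ on $\text{tw}(R,E'')$; the constant $C=30\eta_dc_d8^d$ is chosen precisely so that this is at most $k$ in the admissible range of $k$.

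To lift this to a tree decomposition of all of $G$, I would start with a tree decomposition $(T_R,X_R)$ of $(R,E'')$ and then graft onto it each component $C$ of $(V,E')$. When $C$ contains a representative $r_C$, I would form a width-$1$ tree decomposition of $C$ (viewed as a tree graph) rooted at a bag containing $r_C$, and glue it to any bag of $T_R$ that contains $r_C$. Components without any representative are attached as independent width-$1$ subdecompositions at an arbitrary bag. Verification of the three tree decomposition axioms is then routine: each edge of $E''$ is captured inside $T_R$; each edge of $E'$ lies inside a grafted decomposition by construction; and for any vertex $v\in R\cap C$ the two subtrees of bags containing $v$ (one in $T_R$, one in the grafted decomposition) meet at the attachment bag and at the grafted root, so together they form a connected subtree in the combined decomposition. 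The resulting width is $\max(\text{tw}(R,E''),1)\leq k$. The main obstacle here is the careful constant-level accounting making $\text{tw}(R,E'')\leq k$ hold for the declared value of $C$.

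For the remaining claims, the maximum degree bound follows because $E'$ is a subgraph of the EMST whose degree is bounded by the $d$-dimensional kissing number, and the greedy $3/2$-spanner $(R,E'')$ is known to have constant maximum degree. For the running time, the EMST is built in $\bigO(n^2)$, \cref{lem:subtrees} supplies the subtree partition in $\bigO(n\log n)$ using the constant EMST degree, the edge-removal step in line 7 runs in polynomial time (e.g., $\bigO(n\log n)$ by maintaining for each induced component whether it still contains a representative), and the greedy spanner on $|R|=\bigO(m)=\bigO(n)$ points is built in $\bigO(|R|^2\log|R|)=\bigO(n^2\log n)$; the latter dominates the total, giving the claimed running time.
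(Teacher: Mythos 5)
Your proposal is correct and follows essentially the same route as the paper: bound $|R|$ by $2m-2$, invoke Lemma~\ref{lem:greedytw} with the choice of $C$ to get $\tw(R,E'')\leq k$, observe that after line~7 each remaining tree contains a unique representative so attaching these trees does not raise the tree-width, and account for the running time with the greedy-spanner computation dominating. You merely spell out the grafting of the width-$1$ subdecompositions more explicitly than the paper does.
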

\begin{proof}
The greedy spanners in line 9 of the algorithm can be computed in $\bigO(n^2\log n)$ time~\cite{bose2010computing}. 
The \MST\ can be computed in subquadratic time~\cite{AgarwalES91},
and all remaining steps take $\bigO(n\log n)$ time. The greedy $t$-spanner and the \MST\ both have bounded degree for fixed dimension (and $t$)~\cite{RobinsS95, Narasimhan.2007}.

The set of representatives of all subtrees has size $2m-2$, since each representative is incident to one of the $m-1$ edges removed. Thus, $\mathcal{GS}$ is the greedy spanner of $2m-2$ points. By Lemma~\ref{lem:greedytw} and the choice of the constant $C$ the tree-width of $\mathcal{GS}$ is at most $k$. After line 7 of the algorithm each of the subtrees of the subtrees $T \in \mathcal{T}$ contains only one representative. Thus, the graph $G$ consists of $\mathcal{GS}$ with a tree attached to each of its vertices. These trees do not increase the tree-width, since every tree contains only one vertex from $\mathcal{GS}$.
Thus the tree-width of $G$ is $k$.
\end{proof}

Next we bound the dilation of the computed spanner.

\begin{lemma}\label{lem:dilationd}
Given a set $P\subset \mathbb{R}^d$ of $n$ points for some fixed $d$ and a positive integer $k \leq n^{1-1/d}$, Algorithm \ref{alg:higherdim} computes an $\bigO(n/k^{d/(d-1)})$-spanner for $P$. 
\end{lemma}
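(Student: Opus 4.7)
The plan is to show that for every pair $p, q \in P$, the graph $G$ contains a $p$-$q$ path of length at most $\bigO(n/m)\cdot |pq|$. Since $m = \Theta\bigl(k^{d/(d-1)}\bigr)$ by line 5 of the algorithm, this yields the claimed dilation $\bigO\bigl(n/k^{d/(d-1)}\bigr)$. For each $p\in P$ let $r(p)$ denote the unique representative in the component of $p$ in $F_{T(p)}:=(T(p),E'(T(p)))$, and define $r(q)$ analogously. My candidate path concatenates the unique $p$-$r(p)$ path in $F_{T(p)}$, a shortest $r(p)$-$r(q)$ path in the greedy spanner $E''$, and the unique $r(q)$-$q$ path in $F_{T(q)}$. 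The middle segment has length at most $\tfrac{3}{2}\,|r(p)\,r(q)|$, and the triangle inequality $|r(p)\,r(q)|\leq |r(p)\,p|+|pq|+|q\,r(q)|$ reduces the task to bounding both $d_G(p,r(p))$ and $d_G(r(q),q)$ by $\bigO(n/m)\cdot |pq|$.

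The heart of the proof is the claim that every edge on the (unique) EMST path from $p$ to $r(p)$, which lies entirely inside $T(p)$, has length at most $|pq|$; together with $|T(p)|=\bigO(n/m)$ from \cref{lem:subtrees}, this yields the required bound. I would pick a second representative $r'\in R(T(p))$ to serve as an anchor: if $q$ lies in a different subtree, let $r'$ be the endpoint inside $T(p)$ of the step-5 boundary edge crossed by $\pi(p,q)$, which is a representative by construction; if $q$ lies in the same subtree as $p$ but in a different component of $F_{T(p)}$, set $r':=r(q)$. The key structural observation is that the longest edge on \emph{any} path between two representatives of $T(p)$ is removed during step 7: when it is considered (longest-to-shortest) no longer edge on that same path has yet been removed, so the path is still intact and the edge lies on a rep-rep path. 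Hence the longest edge $e^*$ on the path from $r(p)$ to $r'$ in $T(p)$ is a removed step-7 edge, and a short tree-cut argument places $e^*$ on $\pi(p,q)$: in the different-subtree case this is because $\pi(p,q)$ contains the whole path $p\to r'$ in $T(p)$, and in the same-subtree case because $p,r(p)$ lie on one side of the cut induced by $e^*$ in $T(p)$ while $q,r(q)$ lie on the other (each being connected to its representative by an $F$-path which cannot cross the removed $e^*$). The MST cycle property then yields $|e^*|\leq |pq|$.

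The claim now follows by a case distinction on the position of $r(p)$ relative to $r'$. If $r(p)$ lies on the EMST path from $p$ to $r'$, then $p\to r(p)$ is a prefix of $p\to r'\subseteq \pi(p,q)$, and each of its edges has length at most $|pq|$ directly by the MST cycle property. Otherwise the paths $p\to r(p)$ and $p\to r'$ diverge at some vertex $w$: edges on $p\to w$ lie on $\pi(p,q)$ and are hence $\leq |pq|$, while each edge $e$ on $w\to r(p)$ satisfies $|e|\leq |e^*|$, because any longer edge on that segment would be the longest edge on the path $r(p)\to r'$ and would therefore have been removed in step 7, contradicting its presence in $F_{T(p)}$. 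In either case every edge on $p\to r(p)$ has length at most $|pq|$, and summing over the $\bigO(n/m)$ edges yields $d_G(p,r(p))\leq \bigO(n/m)\cdot |pq|$; the symmetric bound for $d_G(r(q),q)$ follows identically, and combining everything via the greedy-spanner estimate and the triangle inequality proves the lemma. The main obstacle will be the tree-cut argument establishing that $e^*$ really does lie on $\pi(p,q)$ in the same-subtree case; the remainder is a mechanical combination of the MST cycle property with the removal rule of step 7.
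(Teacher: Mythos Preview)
Your approach is correct and follows essentially the same route as the paper: both bound $d_G(p,r(p))$ by showing every edge on the tree path $p\to r(p)$ has length at most $|pq|$, combining the EMST cycle property with the longest-first removal rule of line~7, and then splice in the greedy-spanner path between representatives. The paper organises the case analysis around the \emph{first removed} edge on $\pi(p,q)$ rather than your longest edge $e^*$ on $r(p)\to r'$, but the content is the same.

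Two small repairs are needed. First, you omit the trivial cases: $k=1$ (where $G$ is the EMST and is an $(n-1)$-spanner), and $p,q$ lying in the same component of $F_{T(p)}$ (where the direct $F$-path already works). Second, in the same-subtree case your assertion ``$p\to r'\subseteq\pi(p,q)$'' is false in general, since $r'=r(q)$ need not lie on the path from $p$ to $q$ (e.g.\ take $T(p)$ to be the path $p-r(p)-c-q$ with an extra leaf $r(q)$ at $c$, and let $\{r(p),c\}$ be the removed edge). What you actually need, and what your own tree-cut argument for $e^*$ already delivers, is the weaker fact $p\to w\subseteq\pi(p,q)$: since $e^*$ is removed it cannot lie on $r(p)\to w\subseteq F_{T(p)}$, so $e^*$ lies on $w\to r'$; the cut by $e^*$ then places $w$ on the $p$-side, and because $e^*\in\pi(p,q)$ the prefix of $\pi(p,q)$ up to $e^*$ contains $w$. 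With this correction both sub-cases go through.
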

\begin{proof}
For $k=1$, Algorithm \ref{alg:higherdim} returns the Euclidean minimum spanning tree of $P$, which is an $(n-1)$-spanner~\cite{Eppstein00} and therefore satisfies the dilation bound stated in the lemma.

For $k\geq 2$, consider two points $p,q \in P$.

\textbf{Case 1.} $p$ and $q$ are inside the same subtree $T$ computed in line 5 and remain in the same subtree in line 7. Let $\gamma$ be the path connecting $p$ and $q$ in the \MST. Notice that $\gamma$ is also a path in $G$. The subtree of $p$ and $q$ contains $\bigO(n/m)$ edges, which means that $\gamma$ consists of at most $\bigO(n/m)$ edges. Since $\gamma$ contains only edges of the \MST, we know that every edge of $\gamma$ has length at most $\vert pq\vert$. We can conclude that the length of $\gamma$ is at most $\bigO(n/m)\cdot \vert pq\vert$ and thereby that the dilation of $\{p,q\}$ is $\bigO(n/m)=\bigO(n/k^{d/(d-1)})$.

\textbf{Case 2.} $p$ and $q$ are inside the same subtree $T$ computed in line 5, but in different subtrees $T'$ and $T''$ after line 7, respectively. Both of these trees contain unique representatives $r', r''\in P$. Let $\gamma$ be the path connecting $p$ and $q$ in the \MST. At least one edge of $\gamma$ was removed in line 7. Let $\{p',q'\}$ be the first such edge as seen from $p$. As in case 1, the length $\vert p'q'\vert$ of $\{p',q'\}$ is at most $\vert pq\vert$. The points $p, p'$ are in the same tree $T'$ with representative $r'$. Now let $\gamma'$ be the path from $p'$ to $r'$. All edges on $\gamma'$ have length at most $\vert p'q'\vert$, since otherwise they would have been removed before $\{p',q'\}$. The path $\gamma_{p r'}$ from $p$ to $r'$ only uses edges of $\gamma$ and $\gamma'$, thus at most $\bigO(n/m)$ edges of length at most $\vert pq\vert$. The same holds for the path $\gamma_{q r''}$ from $q$ to $r''$. From this and the triangle inequality, it also follows that $\vert r'r''\vert \leq d(r', p) +\bigO(n/m)\cdot \vert pq\vert + d(q,r'') = \bigO(n/m)\cdot \vert pq\vert.$ 
The greedy 3/2-spanner on the representatives therefore contains a path of length $\bigO(n/m)\cdot \vert pq\vert$ between $r'$ and $r''$. Combining this path with $\gamma_{p r'}$ and $\gamma_{q r''}$ we obtain the necessary dilation.

\textbf{Case 3.} $p$ and $q$ are inside different subtrees computed in line 5. Let $T_p$ be the subtree of $p$ and $T_q$ be subtree of $q$ (of line 5, before line 7). Once again let $\gamma$ be the path connecting $p$ and $q$ in the \MST{} of $P$. Notice that the path $\gamma$ must contain at least two representatives, one representative of $T_p$ and one of $T_q$, since at some point $\gamma$  leaves the subtree $T_p$ through an edge removed in the construction of the subtrees and at some point enters the subtree $T_q$ through a removed edge. The vertices of the removed edges contained in $T_p$ and $T_q$ therefore must, by definition, be representatives. Let $r_p$ be the last representative of $T_p$ on the path $\gamma$ and $r_q$ be the first representative of $\gamma$ in $T_q$. Since $r_p$ and $r_q$ are both representatives, there must be a path connecting $r_p$ and $r_q$ inside the greedy $3/2$-spanner constructed on the set of representatives. The length of the shortest path between $r_p$ and $r_q$ inside the greedy $3/2$-spanner is of length at most $3/2\cdot \vert r_p r_q\vert$. We can bound the distance between $r_p$ and $r_q$ by the distance between $p$ and $q$ and the length of the paths in the \MST{} between $p$ and $r_p$ and $q$ and $r_q$ respectively. Since the subtrees $T_p$ and $T_q$ have size $\bigO(n/m)$, we know that the length of the paths between $p$ and $r_p$ and $q$ and $r_q$ in the \MST{} are at most $\bigO(n/m)\cdot \vert pq\vert$. The distance between $r_p$ and $r_q$ is therefore bounded by
\[d(r_p,p) + \vert pq \vert + d(r_q,q) = (2\cdot \bigO(n/m)+1)\cdot\vert pq\vert.\] If $p$ is in the same subtree as $r_p$ after line 7 and $q$ is in the same subtree as $r_q$, the bound on the dilation now follows. If not, we can bound the additional detour as in Case 2, which completes the case distinction and proof.
\end{proof}

From the previous lemmas we now directly obtain the main result of this section.
\upperBoundHigherDim*

\section{Lower bound} \label{sec:lower_bound}
In the previous section it was shown that \cref{alg:higherdim} gives an upper bound for the dilation of tree-width bounded spanners. In this section, we investigate corresponding lower bounds. We show that the bound given in \cref{thm:upper bound higher dim} for the dilation of a tree-width bounded spanner on Euclidean point sets is asymptotically tight: 

\lowerHigherDim*

To obtain this result, we construct a set of points resembling a grid.
While it is commonly known that a two-dimensional grid has high tree-width, the $\left(k^{1/(d-1)}\right)^d$-grid does not necessarily have tree-width $k$. We thus construct a set of points resembling the $(h+1)^d$-grid, where $h = \left\lceil (9d/2\cdot (k+2))^{1/(d-1)} -1\right\rceil$. 
To lower bound the tree-width of this grid, we generalise a basic idea for the $3$-dimensional grid  given by Korhonen in an online forum\footnote{\href{https://cstheory.stackexchange.com/questions/53029/what-is-the-treewidth-of-the-3d-grid-mesh-or-lattice-with-sidelength-n}{https://cstheory.stackexchange.com/questions/53029/what-is-the-treewidth-of-the-3d-grid-mesh-or-lattice-with-sidelength-n}}:

\begin{lemma}
\label{lem:twgrid}
The $n^d$-grid has tree-width $\geq \frac{2}{9d} \cdot n^{d-1}-1$.
\end{lemma}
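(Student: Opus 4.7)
The plan is to reduce the tree-width lower bound to a lower bound on any $(2/3)$-balanced vertex separator of the grid, and to derive the separator bound by combining the Loomis--Whitney projection inequality with a counting argument on axis-parallel lines. I will invoke the standard fact that $\tw(G)\leq k$ forces $G$ to admit a $(2/3)$-balanced vertex separator of size at most $k+1$ (found as a centroid bag in any optimal tree decomposition), so it is enough to show that every such separator $S$ of the $n^d$-grid satisfies $|S|\geq\tfrac{2}{9d}\,n^{d-1}$.

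Fix such an $S$ with $V\setminus S=A\cup B$ a disjoint partition, $|A|,|B|\leq\tfrac{2}{3}n^d$, no edge between $A$ and $B$, and $|A|\geq|B|$. The case $|S|\geq n^d/3$ is immediate, so we may assume $|A|\in[n^d/3,\,2n^d/3]$. Since $A$ has no neighbours in $B$, the outer vertex boundary $\partial A$ of $A$ lies entirely in $S$, and it suffices to lower bound $|\partial A|$.

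For each direction $i\in\{1,\dots,d\}$ let $\pi_i A\subseteq[n]^{d-1}$ be the projection of $A$ orthogonal to direction $i$, and let $L_i^A$ be the number of lines in direction $i$ contained entirely in $A$. A line in direction $i$ meeting $A$ that is not $A$-pure must contain, along its length, a non-$A$ vertex adjacent to an $A$-vertex, which is a boundary vertex of $A$; since witnesses obtained from distinct lines in the same direction are distinct, one gets $|\partial A|\geq |\pi_i A|-L_i^A$ for every $i$. Averaging over $i$ and invoking Loomis--Whitney ($|A|^{d-1}\leq\prod_i|\pi_i A|$, so $\sum_i|\pi_i A|\geq d\,|A|^{(d-1)/d}$ by AM--GM) together with the trivial bound $\sum_i L_i^A\leq d|A|/n$ yields
\[
|\partial A|\ \geq\ |A|^{(d-1)/d}-|A|/n\ =\ n^{d-1}\bigl(\alpha^{(d-1)/d}-\alpha\bigr),\qquad \alpha:=|A|/n^d.
\]

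It remains to verify that $\alpha^{(d-1)/d}-\alpha\geq\tfrac{2}{9d}$ on $\alpha\in[1/3,2/3]$. The function is concave-minus-linear and hence unimodal, so its minimum on the interval is attained at an endpoint; at $\alpha=2/3$ the value equals $\tfrac{2}{3}\bigl((3/2)^{1/d}-1\bigr)$, and the elementary inequality $e^x-1\geq x$ applied with $x=\ln(3/2)/d>1/(3d)$ delivers the required $\tfrac{2}{9d}$, while $\alpha=1/3$ is handled analogously via $3^{1/d}-1\geq \ln(3)/d$. The main obstacle is the projection-and-boundary step: one has to argue carefully that the ``first transition'' witnesses on distinct lines of a fixed direction really are distinct vertices before averaging, and one must check that the combined use of Loomis--Whitney and the pure-line bound leaves a positive margin uniformly in $d$; once this is done, the final calculus estimate is routine.
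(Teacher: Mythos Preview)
Your argument is correct, and it takes a genuinely different route from the paper's.

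The paper, following an idea of Korhonen, argues via \emph{multicommodity flow congestion}: it exhibits an all-pairs concurrent unit flow on the $n^d$-grid in which every vertex sees at most $d\cdot n^{d+1}$ paths, then observes that any $(2/3)$-balanced separator of size $k+1$ forces at least $\tfrac{2}{9}\,n^{2d}/(k+1)$ source-sink pairs to route through some separator vertex; comparing these two bounds yields $k+1\geq\tfrac{2}{9d}\,n^{d-1}$. Your argument instead derives a \emph{vertex isoperimetric inequality} directly: for each axis direction, a line meeting $A$ but not contained in $A$ contributes a distinct outer boundary vertex, so $|\partial A|\geq|\pi_iA|-L_i^A$; averaging over $i$, applying Loomis--Whitney with AM--GM on the projections, and the trivial bound $L_i^A\leq|A|/n$ gives $|\partial A|\geq n^{d-1}(\alpha^{(d-1)/d}-\alpha)$, and concavity plus the elementary $e^x-1\geq x$ finishes.

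Both proofs reduce to the same separator-to-tree-width fact and land on the same constant $\tfrac{2}{9d}$. Your isoperimetric argument is cleaner and entirely self-contained---it needs only Loomis--Whitney and a line of calculus---but it leans on the axis-aligned product structure of the grid. The paper's flow argument is more combinatorial and, while slightly longer to set up, transfers readily to any graph family for which one can exhibit a low-congestion all-pairs routing.
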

\begin{proof}
To show the lemma, we  first need the definition of an \emph{all-pairs concurrent unit flow}.
Given a graph $G=(V,E)$, an \textit{all-pairs concurrent unit flow} $\Phi$ on $G$ is defined as a set of paths in $G$, so that for every pair of vertices $u,v\in V$ there is a path from $u$ to $v$ in $\Phi$. The \emph{flow at a vertex} $v$ is defined as the number of paths in $\Phi$ that pass through vertex $v$.

We now claim that for a graph $G$ of tree-width $k$ with $n$ vertices and an all-pairs concurrent unit flow on $G$, there is a vertex at which there is a flow of at least
${2n^2}/({9\cdot (k+1))}$.
Since $G$ is of tree-width $k$, there must by Lemma 2.5 in~\cite{RS86} be a separator $S$ of size $k+1$ for $G$. Let $A$ and $B$ be the induced subsets of $S$. Both of the subsets have size at least ${n}/{3}$. This means that in $\Phi$ there are atleast $2\cdot(n/3)^2$ paths between vertices of $A$ and $B$. Since there is no edge between $A$ and $B$, all these paths have to pass through the $k+1$ vertices of $S$. Therefore there must be at least one vertex $v\in S$ at which there is a flow of ${2n^2}/({9\cdot (k+1))}$.

To show that the tree-width of the $n^d$-grid is at least ${2}/(9d) \cdot n^{d-1}-1$, we construct an all-pairs concurrent unit flow $\Phi$ for which at every vertex we have a flow of at most $d\cdot n^{d+1}$. Since by our claim for any all-pairs concurrent unit flow on the $n^d$-grid there must be a vertex at which there is a flow of at least ${2n^{2d}}/({9\cdot (k+1))}$, we can conclude that the tree-width $k$ must be at least $2/(9d) \cdot n^{d-1}-1$. The construction of the all-pairs concurrent unit flow $\Phi$ is as follows: 

Given two vertices $x=(x_1,...,x_d)$ and $y=(y_1,...,y_d)$, the path from $x$ to $y$ will consist of $d$ subpaths, each parallel to one the $d$ axes. The path starts from $x$ and follows the edges parallel to the axis of the first dimension, until the point $(y_1,x_2,...,x_d)$ is reached and ends with the subpath from $(y_1,...,y_{d-1},x_d)$ to $y$. Formally we define this path as the path from $x$ to $y$ containing the edges
\begin{align*}
&\{\{(x_1+\hat{v}_1\cdot(j-1),x_2,\dots,x_d),(x_1+\hat{v}_1\cdot j,x_2,\dots,x_d)\} \mid j\in \{1, \dots,  \vert y_1-x_1\vert \}\}\\
\cup\ {} &\{\{(y_1,x_2+\hat{v}_2 \cdot (j-1),\dots ,x_d),(y_1,x_2+\hat{v}_2 \cdot j,\dots ,x_d)\} \mid j\in \{1, \dots,  \vert y_2-x_2 \vert\}\}\\
\cup\ {} &\dots\\
\cup\ {} &\{\{(y_1,\dots ,y_{d-1},x_d+\hat{v}_d \cdot (j-1)),(y_1,\dots ,y_{d-1},x_d+\hat{v}_d\cdot j)\} \mid j\in \{1, \dots,  \vert y_d-x_d\vert \}\},
\end{align*}
where
\[\hat{v}_i = \begin{cases}
1 & x_i \leq y_i\\
-1 & x_i > y_i\\
\end{cases}\]
is the direction of the $i$-th subpath.

We now argue that the flow at any vertex can be bounded by $d\cdot n^{d+1}$. Let $z$ be some vertex. For a path $\gamma$ from some $x$ to $y$ to pass through $z$ we need $z$ to be on one of the $d$ subpaths of $\gamma$ and therefore we need at least $d-1$ coordinates of $z$ to be equal to the corresponding coordinates of either $x$ or $y$. This leaves us with $d+1$ degrees of freedom for choosing $x$ and $y$ for a fixed $z$ and fixed subpath of $\gamma$. Hence the number of paths of $\Phi$ passing through $z$ is bounded by $d\cdot n^{d+1}$.
\end{proof}

Thus, the tree-width of our grid is at least 
\begin{align*}
\frac{2}{9d} \cdot (h+1)^{d-1}-1  &=  \frac{2}{9d} \cdot \left( \left\lceil (9d/2\cdot (k+2))^{1/(d-1)} -1\right\rceil+1  \right)^{d-1}-1\\
&\geq  \frac{2}{9d} \cdot \left(  (9d/2\cdot (k+2))^{1/(d-1)} \right)^{d-1}-1\\
&= k+1.
\end{align*}

Using an inductive construction for the $(h+1)^d$-grid, by taking $d$ copies of the $(h+1)^{d-1}$-grid and connecting them using $(d-1)\cdot (h+1)^{d-1}$ edges, we can prove that the $(h+1)^d$-grid has a total of $d\cdot(h+1)^d+d\cdot(h+1)^{d-1}$ edges.

The constructing of the grid-like set $P_{d,n,k}$ is as follows:
Let $m=\lfloor n/ (d\cdot (h+1)^d+ d\cdot (h+1)^{d-1})\rfloor$, be the number of points representing an edge of the grid in $P_{d,n,k}$. For every dimension $i\in \{1, \dots, d\}$, we define the set
\[P_i=\bigcup^{h}_{j_1=0}\dots \bigcup^{h}_{j_d=0}\{ (j_1\cdot m,\dots,j_{i-1}\cdot m,x_i,j_{i+1}\cdot m,\dots,j_d\cdot m) \mid x_i\in \{0, \dots, h\cdot m\}\},\]
which consists of $(h+1)^{d-1}$ sets of collinear points representing the grid edges parallel to the axis of the $i$-th dimension. Each of these collinear sets of points consists of $h \cdot m+1$ points and represents $h$ edges of the grid. Hence $\vert P_i \vert = (h \cdot m+1)\cdot (h+1)^{d-1}$ for every $i$. We can now define the set $P_{d,n,k}=\bigcup_{i=1}^d P_i$ which represents the whole $(h+1)^d$-grid. (In \cref{fig:grid} an example is shown for a $2$-dimensional point set.) Notice that the $(h+1)^d$ points $P_{\boxplus}=\{(j_1\cdot m, \dots , j_d\cdot m)\mid j_1,\dots, j_d \in \{0, \dots, h\}\}$ are contained in every $P_i$. We call these points the \textit{grid points} of the set $P_{d,n,k}$ and two grid points are called \textit{neighbouring} if the Euclidean distance between them is $m$, which corresponds to them being connected through an edge in the $(h+1)^d$-grid.

\begin{figure}[htbp]
  \centering
  \includegraphics[page=2]{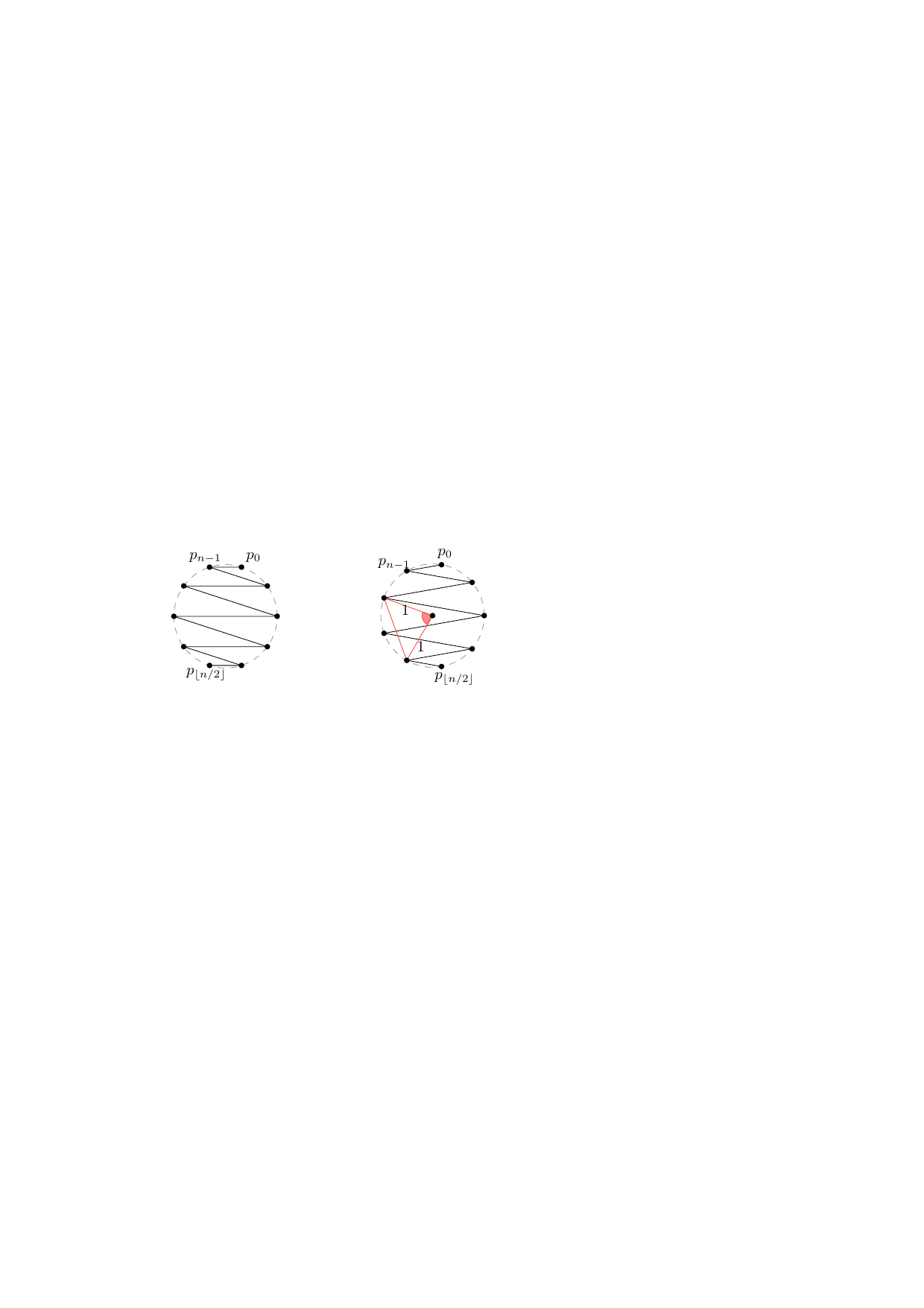}
  \caption{The set of points resembling the $(4+1)^2$-grid where $n=240$ and $m=6$.}
  \label{fig:grid}
\end{figure}

\begin{lemma}
\label{lem:UB_gridsize}
Let $d\geq 2$ be a fixed integer. The grid-like set $P_{d,n,k}$ is well-defined for every integer $n\geq d\cdot 64^{d \log d}$ and positive integer $k \leq n^{(d-1)/d}\cdot (5d)^{1/d-2}$ and is of size at most $n$.
\end{lemma}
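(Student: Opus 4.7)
The lemma has two parts: that $P_{d,n,k}$ is well-defined (meaning $m = \lfloor n/(d(h+1)^d + d(h+1)^{d-1}) \rfloor$ is a positive integer, so that each grid edge carries at least one interior point) and that $|P_{d,n,k}| \leq n$. My plan is first to count $|P_{d,n,k}|$ exactly via inclusion--exclusion, and then to use the hypotheses on $n$ and $k$ to control both the total size and the scaling factor $m$.

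For the size bound, I apply inclusion--exclusion to the $d$ axis-aligned sets $P_1,\dots,P_d$. The key observation is that whenever $i\neq j$, membership in $P_i$ forces every coordinate except the $i$-th to be a multiple of $m$, while $P_j$ does the same for every coordinate except the $j$-th; together, these force all coordinates to be multiples of $m$, so $P_i\cap P_j = P_{\boxplus}$, and the same equality holds for every higher-order intersection. Inclusion--exclusion therefore gives
\[
|P_{d,n,k}| \;=\; d(hm+1)(h+1)^{d-1} - (d-1)(h+1)^d \;=\; dhm(h+1)^{d-1} + (h+1)^{d-1}\bigl[\,1-(d-1)h\,\bigr].
\]
Because $k\geq 1$ yields $\alpha := (9d/2\cdot (k+2))^{1/(d-1)} \geq 27^{1/(d-1)} > 1$, the value $h = \lceil \alpha - 1\rceil$ is at least $1$, which makes the bracketed term non-positive. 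Using the definition of $m$ (so $m \leq n/(d(h+1)^{d-1}(h+2))$), this yields $|P_{d,n,k}| \leq dhm(h+1)^{d-1} \leq nh/(h+2) < n$.

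For well-definedness I must show $n \geq d(h+1)^{d-1}(h+2)$. Using $h \leq \alpha$ together with $\alpha + 1 \leq 2\alpha$ and $\alpha + 2 \leq 3\alpha$ (both valid since $\alpha \geq 1$), it suffices to prove
\[
n \;\geq\; 3d\cdot 2^{d-1}\cdot \alpha^d \;=\; 3d\cdot 2^{d-1}\cdot (9d/2)^{d/(d-1)}\cdot (k+2)^{d/(d-1)}.
\]
Plugging the hypothesis $k \leq n^{(d-1)/d}(5d)^{1/d-2}$ into $(k+2)^{d/(d-1)}$ and expanding, the factors involving $d$ and $5$ telescope so that the inequality reduces to a concrete bound purely in terms of $d$, and the absolute bound $n \geq d \cdot 64^{d\log d}$ provides enough slack to absorb the additive shift introduced by the ``$+2$''. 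The main obstacle is this last computation: although no single step is deep, one has to track the exponents $d/(d-1)$ and $(1-2d)/(d-1)$ carefully and verify that the resulting constants (which are why $(5d)^{1/d-2}$ and $64^{d\log d}$ appear in the statement) indeed work for every admissible $d$.
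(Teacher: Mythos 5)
Your size bound is correct and is essentially the paper's argument: the same inclusion--exclusion count $d(hm+1)(h+1)^{d-1}-(d-1)(h+1)^d$, followed by discarding the non-positive remainder (using $h\geq 1$) and substituting the definition of $m$. That half is fine.

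The well-definedness half has a genuine gap: the computation you defer as ``the main obstacle'' is the entire content of the claim, and the reduction you set up before deferring it cannot be closed with the constants in the statement. After weakening $h+1\leq \alpha+1\leq 2\alpha$ and $h+2\leq 3\alpha$, your sufficient condition is $n\geq 3d\cdot 2^{d-1}(9d/2)^{d/(d-1)}(k+2)^{d/(d-1)}$. Substituting $k^{d/(d-1)}\leq n\,(5d)^{(1-2d)/(d-1)}$ makes the $n$'s cancel exactly, so (ignoring the $+2$ for a moment) you need the pure-$d$ inequality $1\geq 3d\cdot 2^{d-1}(9d/2)^{d/(d-1)}(5d)^{(1-2d)/(d-1)}$; this is \emph{false} for $d\geq 3$ (the right-hand side is about $2.05$ for $d=3$ and $4.2$ for $d=4$), because the factor $3\cdot 2^{d-1}$ you lose in bounding $\alpha+1\leq 2\alpha$ coordinatewise exceeds the roughly constant slack $\tfrac{1}{5}(9/10)^{d/(d-1)}$ built into $(5d)^{1/d-2}$. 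For $d=2$ the pure-$d$ inequality holds only with margin $0.972$ versus $1$, which is far too little to absorb the $+2$: since the $n$'s have cancelled, the only remaining $n$-slack is in the additive shift, and one checks that $d=2$, $n=8192=d\cdot 64^{d\log d}$, $k=2\leq n^{1/2}(10)^{-3/2}$ gives $3d\cdot 2^{d-1}\alpha^d=15552>8192=n$, even though $m\geq 1$ does hold there (the true requirement is $2\cdot 36^2+2\cdot 36=2664\leq n$). The paper avoids this trap by \emph{not} simplifying $\alpha+1$ to $2\alpha$: it solves $n\geq 2d(\alpha+1)^d$ for $k$ exactly, obtaining $k\leq \frac{2}{9d}\bigl((n/(2d))^{1/d}-1\bigr)^{d-1}-2$, and only then lower-bounds this expression via $(n/(2d))^{1/d}-1\geq (n/(3d))^{1/d}$ and $\frac{2}{9d}X-2\geq\frac{1}{5d}X$, both of which are justified by $n\geq d\cdot 64^{d\log d}$; the approximations are thereby pushed to the side where the $n$-slack can pay for them. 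You would need to restructure your argument along those lines.
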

\begin{proof}
The size of $P_{d,n,k}$ is bounded by $d\cdot (h\cdot m+1)\cdot (h+1)^{d-1} - (d-1)\cdot (h+1)^d$. This bound is obtained by adding up the number of points for every $P_i$ and subtracting the number of grid points $(d-1)$ times. Thus for the total number of points in $ P_{d,n,k}$ we obtain
\begin{align*}
\vert P_{d,n,k}\vert &= d\cdot (h\cdot m+1)\cdot (h+1)^{d-1} - (d-1)\cdot (h+1)^d\\
&= d\cdot h\cdot (h+1)^{d-1} \cdot m + d\cdot (h+1)^{d-1} - (d-1)\cdot (h+1)^d\\
&\leq d\cdot h\cdot (h+1)^{d-1} \cdot \left\lfloor \frac{n}{ d\cdot (h+1)^d+ d\cdot (h+1)^{d-1}}\right\rfloor\\
&\leq d\cdot (h+1)^{d} \cdot \frac{n}{ d\cdot (h+1)^d+ d\cdot (h+1)^{d-1}} \leq n.
\end{align*}
%
But in order for $P_{d,n,k}$ to be well defined, i.e.~resemble the $(h+1)^d$-grid, we need $m$ to be $\geq 1$. This induces, given some $d$ and a lower bound of one for $k$, a lower bound on $n$ and an upper bound on $k$. To compute appropriately small or respectively large bounds, we first bound the number of edges of the $(h+1)^d$-grid, being $d\cdot (h+1)^d+d \cdot(h+1)^{d-1}$ by
\[d\cdot (h+1)^d+d\cdot (h+1)^{d-1}\leq 2d\cdot (h+1)^d \leq 2d\left(\left(\frac{9d}{2}\cdot (k+2)\right)^{1/(d-1)}+1\right)^d.\]
In case that $k=1$, we need $n$ to be large enough so that $m\geq 1$. A lower bound for $n$ can therefore be obtained by setting $k=1$ in $2d\cdot (h+1)^d$ and then simplifying $2d\cdot (h+1)^d$ as follows:
\begin{align*}
& 2d \left(\left(\frac{9d}{2}\left(1+2\right)\right)^{1/(d-1)}+1\right)^d = 2d \left(\left(\frac{27d}{2}\right)^{1/(d-1)}+1\right)^d\\
\leq{} & 2d \left(\frac{27}{2} \cdot d +1 \right)^d \leq 2d\cdot 14^d \cdot d^d \leq d\cdot 64^{d\log d}
\end{align*}
The obtained lower bound for $n$ hence is $d\cdot 64^{d\log d}$. Given some $n\geq d\cdot 64^{d\log d}$ and some $d$ we now need an upper bound on $k$, so that $P_{d,n,k}$ is well defined, i.e.~$m\geq 1$. This bound can be obtained by solving $n\geq 2d\cdot (h+1)^d$ for $k$ which gives us

\[ n \geq 2d \left(\left(\frac{9d}{2}\left(k+2\right)\right)^{1/(d-1)}+1\right)^d \Longleftrightarrow \frac{2}{9d}\left( \left( \frac{n}{2d}\right)^{1/d}-1\right)^{d-1}-2\geq k\]

which using the lower bound on $n$ can be simplified as follows:

\begin{align*}
     \frac{2}{9d}\left( \left( \frac{n}{2d}\right)^{1/d}-1\right)^{d-1}-2 \geq \frac{2}{9d} \left( \frac{n}{3d} \right)^{\frac{d-1}{d}}-2 \geq \frac{1}{5d} \left( \frac{n}{3d} \right)^{\frac{d-1}{d}}\geq  n^{\frac{d-1}{d}} \cdot (5d)^{\frac{1}{d}-2}.
\end{align*}

\end{proof}

\begin{lemma}
If $G$ is a geometric $o\left(n/k^{d/(d-1)}\right)$-spanner on $P_{d,n,k}$, then it must be of tree-width $k+1$.
\end{lemma}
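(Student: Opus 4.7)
The plan is to show that any such spanner $G$ contains the $(h+1)^d$-grid as a minor. Since tree-width is minor-monotone and the $(h+1)^d$-grid has tree-width at least $k+1$ by the calculation just preceding the lemma, this will yield $\tw(G) \geq k+1$.

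I would build the minor by assigning to each grid point $u \in P_\boxplus$ a branch set $B(u)$ consisting of $u$ together with, on each of the up to $2d$ segments of $P_{d,n,k}$ incident to $u$, the $\lfloor m/2 \rfloor$ intermediate points closest to $u$ on that segment. The sets $B(u)$ partition $P_{d,n,k}$ up to tie-breaking at segment midpoints, so it suffices to show (a) each $G[B(u)]$ is connected, and (b) for every pair $u,v$ of neighbouring grid points, $G$ contains an edge with one endpoint in $B(u)$ and one in $B(v)$.

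Both claims rest on a \emph{confinement} observation: if $p,q$ lie on the same grid segment $[u,v]$ at Euclidean distance at least $r$ from both endpoints, then every point of $P_{d,n,k}\setminus [u,v]$ is at Euclidean distance at least $r$ from each of $p,q$, so every spanner path from $p$ to $q$ that leaves $[u,v]$ has length at least $2r$. Applied with $r = \Theta(m)$ and consecutive pairs $p_i,p_{i+1}$ at distance $1$, this forces the spanner path between mid-segment consecutive points to stay on the segment whenever $t = o(m)$. Claim (b) follows at once: the two midpoints $p_{\lfloor m/2\rfloor}\in B(u)$ and $p_{\lfloor m/2\rfloor+1}\in B(v)$ are unit-distance apart and at distance $\Theta(m)$ from both endpoints of $[u,v]$, so their spanner path of length at most $t=o(m)$ remains inside $[u,v]$ and must somewhere contain an edge crossing from $B(u)$ to $B(v)$.

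The main obstacle is claim (a), where the confinement observation degrades near a grid vertex $u$: perpendicular segments incident to $u$ come within unit Euclidean distance. I would bootstrap around this. First, apply the confinement observation along each segment to conclude that the central portion $\{p_{m/4},\ldots,p_{3m/4}\}$ of every incident segment lies in a single component of $G$ restricted to that segment. Second, connect $u$ to this central block using the spanner paths from $u$ to $p_{m/4}$ (of length at most $tm/4 = o(m^2)$), arguing that any detour of such a path either (i) stays within the perpendicular half-segments incident to $u$, which are themselves contained in $B(u)$, or (ii) touches some other grid point, which costs at least $\Omega(m)$ per touch and so permits only finitely many detours before exceeding the dilation budget; type-(ii) detours can additionally be short-circuited out of any shortest path, so the path can be taken to lie inside $B(u)$. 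Making this near-endpoint bookkeeping rigorous is the bulk of the work; once completed, the branch sets $B(u)$ realize the $(h+1)^d$-grid as a minor of $G$ and the tree-width lower bound follows.
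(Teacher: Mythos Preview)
Your claim (a) --- that $G[B(u)]$ is connected --- is false in general, and your bootstrap sketch does not (and cannot) establish it. Concretely, take the natural path-spanner on $P_{d,n,k}$ and modify one segment $[u,v]$: writing $\ell=\lfloor m/2\rfloor$, delete the edges $p_{\ell-2}p_{\ell-1}$ and $p_{\ell-1}p_{\ell}$ and insert $p_{\ell-2}p_{\ell}$ and $p_{\ell-1}p_{\ell+1}$. The result is a $5$-spanner, hence certainly an $o(m)$-spanner, yet the unique $G$-neighbour of $p_{\ell-1}$ is $p_{\ell+1}\in B(v)$; since $\ell-1<\lfloor m/2\rfloor$, the point $p_{\ell-1}$ lies strictly inside $B(u)$ regardless of how you break ties at the midpoint, so $G[B(u)]$ is disconnected. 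Your dichotomy (i)/(ii) in the second bootstrap step is also not exhaustive: a spanner path may simply cross the midpoint of $[u,v]$ into $B(v)$ without ever touching another grid point, so a length budget of $t\cdot m/4=o(m^2)$ gives no confinement to $B(u)$. Even if you did manage to connect $u$ to $p_{m/4}$ inside $B(u)$, the boundary points $p_{\ell-O(t)},\ldots,p_{\ell}$ would still be unaccounted for --- and the counterexample shows they genuinely can be cut off.

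The paper sidesteps this by not fixing branch sets in advance. For each grid edge $(p,q)$ it concatenates the spanner paths between consecutive segment points $p_i,p_{i+1}$ --- each of length at most $m/4$ and hence confined to the hypercube $p_i+[-m/4,m/4]^d$ --- and removes cycles to obtain a single path $\gamma_{p,q}$. This path lies in $\mathcal{R}_p\cup\mathcal{R}_s\cup\mathcal{R}_q$ (three axis-aligned hypercubes of side $m/2$ centred at $p$, the midpoint $s$, and $q$) and, crucially, visits all its $\mathcal{R}_p$-vertices before any of its $\mathcal{R}_q$-vertices. One then contracts the $\mathcal{R}_p$-prefix of $\gamma_{p,q}$ into $p$ and the remainder into $q$. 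Global consistency is purely geometric: any vertex of $\gamma_{p,q}$ in $\mathcal{R}_s\setminus(\mathcal{R}_p\cup\mathcal{R}_q)$ cannot lie in $\mathcal{R}_{r,r'}$ for any other grid edge $(r,r')$, so the contractions for different grid edges never conflict. This produces the $(h+1)^d$-grid minor without ever needing a connectivity statement like your claim (a).
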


\begin{proof}
Let $G$ be a $o(n/k^{d/(d-1)})$-spanner on $P_{d,n,k}$. Let $\mathcal{R}_{x} := x+[-m/4, m/4]^d$ denote the hypercube of side-length $m/2$ centered at $x$. For any pair $p,q\in P_{\boxplus}$ of neighbouring grid points, we consider the three hypercubes $\mathcal{R}_{p}, \mathcal{R}_{q}$, and $\mathcal{R}_{s} := s+[-m/4,m/4]^d$, where $s= \frac{p+q}{2}$ is the midpoint between $p$ and $q$. We argue that there is a path in $G$ from $p$ to $q$ such that: \emph{(i)} the path only visits vertices in  $\mathcal{R}_{p, q}:= \mathcal{R}_{p} \cup \mathcal{R}_{q} \cup \mathcal{R}_{s}$ and \emph{(ii)} for any $p' \in \mathcal{R}_{p}$ and $q' \in \mathcal{R}_{q}$, if the path visits $p'$ and $q'$ then it visits $p'$ before it visits $q'$. 

Let $p_0 = p, p_1, p_2, \ldots p_m=q$ the ordered sequence of points representing the edge of the grid between $p$ and $q$, and further let $s_i := \frac{p_i + p_{i+1}}{2}$.  Since $G$ is a $o(n/k^{d/(d-1)})$-spanner, we can assume that there is for every pair $p_i$, $p_{i+1}$ a shortest path $\gamma_i$ in $G$ of length at most $m/4 = \Theta(n/k^{d/(d-1)})$. Such a path necessarily will remain with in $\mathcal{R}_{p_i}\cap \mathcal{R}_{p_{i+1}} \subset \mathcal{R}_{s_i}$.
For $0 \leq i < m/2$ we have $\mathcal{R}_{s_i} \subset  \mathcal{R}_{p} \cup \mathcal{R}_{s}$, and for $m/2 \leq i <m$ we have $\mathcal{R}_{s_i} \subset  \mathcal{R}_{q} \cup \mathcal{R}_{s}$.

Consider the walk from $p$ to $q$ obtained by concatenating the paths $\gamma_0, \gamma_1, \ldots \gamma_{m-1}$. It first only visits vertices in $\mathcal{R}_{p} \cup \mathcal{R}_{s}$ and then only vertices in $\mathcal{R}_{q} \cup \mathcal{R}_{s}$. By removing cycles we obtain a path $\gamma_{p,q}$ from $p$ to $q$ with this property. Thus, we have a path with the properties \emph{(i)}$+$\emph{(ii)} stated above.

We have such a path $\gamma_{p,q}$ for any pair $p,q\in P_{\boxplus}$.
We claim that this implies that $G$ must have a minor isomorphic to the $(h+1)^d$-grid and is therefore of tree-width $>k$. To show this, we perform a series of path contractions on $G$, obtaining a graph $G'$. More specifically, we first contract the edges of $\gamma_{p,q}$ starting from $p$, until we encounter as endpoint the last vertex in $\mathcal{R}_{p}$, identifying the merged vertices with $p$. After skipping an edge, we contract the remaining edges of $\gamma_{p,q}$, identifying the merged vertices with $q$. This is illustrated in \cref{fig:newgrid}. After these contractions, the remaining edge of $\gamma_{p,q}$ is an edge from $p$ to $q$.

Further we observe that \emph{(i)} any vertex $p'$ of $\gamma_{p,q}$ in $\mathcal{R}_{p}$ has been identified with $p$, \emph{(ii)} any vertex $q'$ of $\gamma_{p,q}$ in $\mathcal{R}_{q}$ has been identified with $q$, and \emph{(iii)} any other vertex $p''$ of $\gamma_{p,q}$ does not also lie on another path $\gamma_{r,s}$ with $\{p, q\} \neq \{r, s\}$. The latter holds, since $p''$ cannot lie in $\mathcal{R}_{r, s}
$. This means, that the paths $\gamma_{p,q}$ and $\gamma_{r,s}$ are contracted consistently.
Thus, regardless of the order in which we contract the paths, we obtain a graph $G'$ which has a minor isomorphic to the $(h+1)^d$-grid. In $G'$, there is an edge between every neighbouring pair of grid points. Therefore the subgraph $G'[P_{\boxplus}]$ containing only the grid points is, if we remove edges between non-neighbouring grid points, isomorphic to the $(h+1)^d$-grid. 
We conclude that the spanner $G$ has a minor isomorphic to the $(h+1)^d$-grid and by Lemma~\ref{lem:twgrid} therefore must be of tree-width at least $\frac{2}{9d}(h+1)^{d-1}-1\geq k+1$. 
\end{proof}

\begin{figure}[htbp]
  \centering
  \includegraphics[page=4]{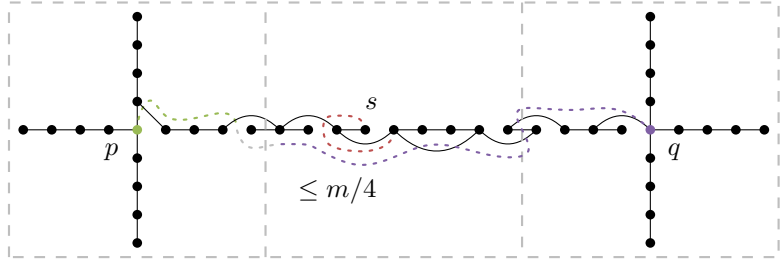}
  \caption{Grid points $p$ and $q$ as well as the three hypercubes centered at $p$, $q$ and $s$. The subpath of $\gamma_{p,q}$ that is contracted to $p$ is shown in green and the subpath that is contracted to $q$ in purple. The red path is a detour of length $\leq m/4$ for points $p_i$ and $p_{i+1}$.}
  \label{fig:newgrid}
\end{figure}

\section{Minor-3-cores and the tree-width of planar spanners}\label{sec:planar}
We now have given an asymptotically tight algorithm to obtain small bounded tree-width, with a dilation dependent on the chosen tree-width. 
We now focus on bounded tree-width spanners with the additional property of being plane. 

In general, there are graphs with $m$ edges and tree-width at least $m \cdot \epsilon$ for a constant $\epsilon$ \cite{Grohe2009}. We show that this is not true for conneced planar graphs, yielding a tight dependency between tree-width and the number of edges:

\twtreeaddedges*

\subsection{Minor-3-cores} \label{subsec:minor3cores}

To obtain this Theorem, we first have to introduce the concept of \textit{minor-3-cores}, in reference to the $k$-core which was established by Seidman~\cite{Seidman1983} for the study of social networks. We then use this concept to obtain the tree-width of planar graphs with a certain fixed number of edges. 

\begin{definition}[Minor-3-Core]
Let $G$ be a graph and $H$ a minor of $G$ with minimum degree $3$. If there is no minor of minimum degree $3$ that has more edges than $H$ i.e., $H$ is an edge maximal minor with this property, we call it a \textit{minor-3-core} of $G$.
\end{definition}

This definition gives a very clear idea of what a minor-3-core should be, but unfortunately it does not lead to any structural insight concerning the concept. 
It especially leaves open whether minor-3-cores are unique or if there are multiple non-isomorphic minor-3-cores. 
We therefore need to further characterise minor-3-cores with regard to algorithmic properties and uniqueness. In order to answer these questions, we first define the disjoint-paths property. 

\begin{definition}[Disjoint-Paths Property]
Let $G$ be a graph and let $C$ be a minor-3-core of $G$ as well as $v$ be a vertex of the minor-3-core (and therefore a vertex of $G$ that has not been contracted). We say that $v$ fulfils the \textit{disjoint-paths property} for $C$, if for every neighbour $u_1,...,u_{\Delta_C (v)}$ of $v$ in $C$, there is a path $\gamma_i$ in $G$ which leads from $v$ to $u_i$ and $\left(\bigcup_{u \text{ in } \gamma_i} u \right) \cap \left( \bigcup_{w \text{ in } \gamma_j} w \right) = \{v\}$ for every path $\gamma_j$ for $j\neq i$.
\end{definition}

We now show that we can assume having the disjoint paths property in a minor-3-core:

\begin{lemma}
\label{lem:m3cdpp}
For every graph $G$ and minor-3-core $C$ of $G$, there is a minor-3-core isomorphic to $C$ in which every vertex has the disjoint paths property.
\end{lemma}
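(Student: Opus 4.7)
I would attack the lemma by contradiction, exploiting the edge-maximality built into the minor-3-core definition together with a minimality choice on the branch sets. Fix any realization of $C$ as a minor of $G$: a family of pairwise-disjoint connected branch sets $\{B_v \subseteq V(G) : v \in V(C)\}$ whose contraction yields $C$, with each abstract vertex $v$ represented by the uncontracted $G$-vertex $v \in B_v$. Among all realizations of minor-3-cores isomorphic to $C$, I would select one that minimizes the total branch-set size $\sum_{v \in V(C)} |B_v|$. The claim is that this realization satisfies the disjoint-paths property at every vertex.

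For a fixed $v \in V(C)$, let $x_1,\ldots,x_{\Delta_C(v)} \in B_v$ denote the $B_v$-endpoints of the edges of $G$ corresponding to the $\Delta_C(v)$ edges of $C$ incident to $v$. Since the branch sets $B_{u_i}$ are pairwise disjoint and connected, it suffices to construct internally vertex-disjoint paths from $v$ to each $x_i$ inside $G[B_v]$, because the continuations through the various $B_{u_i}$ inherit disjointness automatically. Fix a spanning tree $T_v$ of $G[B_v]$ rooted at $v$; its paths from $v$ to the $x_i$ are disjoint except at $v$ precisely when every subtree of $T_v$ rooted at a child of $v$ contains at most one $x_i$. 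Suppose this fails, so some child $z$ of $v$ has a subtree $T_v[z]$ containing $\ell \geq 2$ of the $x_i$'s.

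I would then split $B_v$ along the tree-edge $\{v,z\}$ into two connected branch sets $V(T_v)\setminus V(T_v[z])$ and $V(T_v[z])$, yielding a new minor $C'$ of $G$ with $|V(C)|+1$ vertices and $|E(C)|+1$ edges: the extra edge comes from $\{v,z\}$, and the $\ell$ edges from $v$ into $\{u_{i_1},\ldots,u_{i_\ell}\}$ are rerouted to emanate from the new vertex, call it $z$. In the main case $\ell \leq \Delta_C(v)-2$, both $v$ and $z$ retain degree $\geq 3$ in $C'$, while all other vertex degrees are unchanged, so $C'$ is a minor of $G$ with minimum degree $3$ and strictly more edges than $C$, contradicting the edge-maximality of $C$ as a minor-3-core.

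The boundary cases $\ell \in \{\Delta_C(v)-1,\Delta_C(v)\}$ would instead contradict the minimality of $\sum_v |B_v|$: when almost all attachment points concentrate below $z$, the portion of $B_v$ strictly above $z$ carries no external adjacencies and can be pruned, relabeling $z$ (or another suitable vertex in $T_v[z]$) as the representative of the abstract vertex ``$v$'' of $C$ and so obtaining an isomorphic minor-3-core with strictly smaller total branch-set size. The principal obstacle lies in the subcase $\ell = \Delta_C(v)-1$, where exactly one attachment point $x_{i^*}$ lies outside $T_v[z]$: the shrunk branch set must remain connected and retain all $\Delta_C(v)$ attachment points while strictly decreasing $\sum_v|B_v|$. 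I expect to resolve this by a finer Steiner-tree analysis of $T_v$ — specifically, by noting that since $T_v[z]$ is itself branching, one can apply the splitting argument one level deeper, or alternatively replace $v$ in the branch set by the unique $v$-to-$z$ path's ``lowest useful'' vertex (the nearest common ancestor of $x_{i^*}$ and $T_v[z]$), giving a branch set strictly contained in $V(T_v)$ and contradicting minimality.
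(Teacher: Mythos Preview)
Your core idea—splitting a branch set along a tree edge to obtain a minor of minimum degree $\geq 3$ with strictly more edges, contradicting edge-maximality—is the same mechanism the paper uses. The paper, however, organises the split differently: rather than fixing the representative $v$ and rooting a spanning tree there, it takes a \emph{minimal} set of vertices inside the branch set from which disjoint paths reach all neighbouring branch sets, notes that minimality forces each such vertex to be the origin of at least two of those paths, and then splits along a path between two such hubs. That guarantees degree $\geq 3$ on both sides in one stroke and sidesteps your boundary cases entirely. This matters, because for a vertex with $\Delta_C(v)=3$ your ``main case'' $2\leq\ell\leq\Delta_C(v)-2$ is empty, so the whole weight falls on the boundary analysis.

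Your treatment of $\ell=\Delta_C(v)-1$ has a real gap. In a tree rooted at $v$, the nearest common ancestor of $x_{i^*}\notin T_v[z]$ and $T_v[z]$ is $v$ itself, so that suggestion yields no shrinkage; and branch-set size minimality alone does not rescue the case either, since $B_v$ may already be the Steiner hull of its attachment points while the chosen representative $v$ still fails the disjoint-paths property. The repair is not to shrink $B_v$ but to \emph{move the representative}: search for a vertex $w\in B_v$ such that every component of $T_v-w$ contains at most one attachment point. If no such $w$ exists, then some edge of $T_v$ has $\geq 2$ attachment points on each side and your main-case split applies directly; if such a $w$ does exist, it is the desired representative. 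Your ``apply the splitting argument one level deeper'' remark gestures at this, but it needs to be recast as re-rooting (i.e.\ changing the representative) rather than as a further split-or-shrink step.
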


\begin{proof}
Let $C$ be a minor-3-core of a graph $G$. As $C$ is a minor of $G$, there is a series of contractions on a subgraph $G'$ of $G$ that results in $C$. 
The contractions on $G'$ induce a partition of $G'$, in which every set can be associated with a vertex of $C$. Two vertices $u,v\in V(G')$ are said to be in the same set, if in the construction of $C$ they are (transitively) contracted to the same vertex. More formally: Let $G'_\ell$ be the graph $G'$, where we label every vertex of $G'$ by a singleton set containing a vertex of $G'$, i.e. $G'_\ell=(\{\{v\}\mid v\in V(G')\}, \{\{\{u\},\{v\}\}\mid \{u,v\}\in E(G')\})$. If we perform a contraction of an edge $\{U,W\}$ on this graph, we label the resulting vertex $U\cup W$. Now, if we perform the contractions of $G'$ that result in $C$ on $G'_\ell$, we end up with an graph $C_\ell$ isomorphic to $C$, where every vertex $U$ is the set of vertices of $G'$ that were (transitively) contracted to form the vertex $U$, i.e. $U$ is a set of the partition.

We now want to show that there is exactly one \emph{canonical vertex} in every $U \in V(C_\ell)$, denoted by $u_c$, such that $u_c \in U$, $\Delta_{G'}(u_c) \geq 3$ (and thus $\Delta_{G}(u_c) \geq 3$) and that fulfils the disjoint path property for $C_\ell'$, where by $C_\ell'$ we denote a graph isomorphic to $C_\ell'$ where every $U \in V(C_\ell)$ is replaced by its canonical vertex. 
Let $W_1,...,W_{\Delta (U)}$ be the neighbours of $U$. For every $W_i$ there is a path in $G$ starting at $u_c$ that leads to a vertex of $W_i$ and that is disjoint (except for the vertex $u_c$ itself) to every path starting at $u_c$ and leading to a vertex of $W_j$ for $j\neq i$. We argue that there is exactly one vertex for every $U\in V(C_\ell)$ with this property. 

Assume that there is no single vertex with this property. Then there must be a minimal set of vertices in $U$, so that there are disjoint (except for the starting vertices) paths in $G$ to vertices of the sets that are neighbours of $U$. We can assume that the vertices of this minimal set are at least of degree 3 and that there are at least two (almost) disjoint paths starting at each of these vertices. Let $u_1$ and $u_2$ be a pair of these vertices, such that there is a path inside $G[U]$ that connects $u_1$ and $u_2$ and that is disjoint to the paths to the vertices of the sets that are neighbours of $U$. Notice that the vertices on this path were contracted in the construction of $C$ and that contracting the same vertices except for $u_1$ and $u_2$, would increase the number of edges of $C$ by one, while every vertex is still of degree at least 3. This is clearly a contradiction to $C$ being a minor-3-core and we can conclude that there must be exactly one vertex in $U$ that has the disjoint-paths property.
\end{proof}

By \cref{lem:m3cdpp} we now know that there is a minor-3-core, in which every vertex has the disjoint-paths property. We call a minor with this property a \emph{canonical minor-3-core}:

\begin{definition}[Canonical Minor-3-Core]
Let $C$ be a minor-3-core of a graph $G$. We call $C$ the canonical minor-3-core, if every vertex of $C$ has the disjoint-paths property. The vertices of $C$ are called the canonical vertices.
\end{definition}

We now show that the canonical minor-3-core is, as the name implies, unique: 

\begin{lemma}\label{lem:can-m3c-unique}
Let $G$ be a graph, the canonical minor-3-core of $G$ is unique.
\end{lemma}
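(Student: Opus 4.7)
Proof plan:

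Suppose $C_1$ and $C_2$ are two canonical minor-3-cores of $G$. I aim to show that $V(C_1) = V(C_2)$ as subsets of $V(G)$ and that their edge sets coincide, whence $C_1 = C_2$. The approach is to give an intrinsic characterization of the canonical vertex set and edges, depending only on $G$, so that both $C_i$ must agree with it.

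Define the \emph{topological reduction} of $G$ as the following process: while $G$ contains a vertex of degree at most $1$, delete it; while $G$ contains a degree-$2$ vertex $x$ with neighbours $u, w$, ``smooth'' $x$ by deleting $x$ and adding the edge $\{u, w\}$ if it is not already present. The procedure terminates in a graph $R(G)$ of minimum degree at least $3$ (or the empty graph). First I would verify, by a standard local-confluence argument on commutations of single steps, that $R(G)$ and in particular its vertex set $V^\ast \subseteq V(G)$ are independent of the order of operations, and hence uniquely determined by $G$.

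Next I would prove $V(C_i) = V^\ast$ for each $i \in \{1, 2\}$. For the inclusion $V(C_i) \subseteq V^\ast$: any $v \in V(C_i)$ has, by the disjoint-paths property, at least $\Delta_{C_i}(v) \geq 3$ internally vertex-disjoint paths in $G$ from $v$ to distinct canonical vertices of $C_i$. Deleting a degree-$\leq 1$ vertex cannot touch these paths, and smoothing a degree-$2$ internal vertex only shortens a path rather than destroying the degree-$3$ neighbourhood of $v$; hence $v$ retains degree $\geq 3$ throughout the reduction and lies in $V^\ast$. For the reverse inclusion $V^\ast \subseteq V(C_i)$: suppose for contradiction that $v \in V^\ast \setminus V(C_i)$. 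Because $v$ has degree $\geq 3$ in $R(G)$, the reduction exhibits at least three internally vertex-disjoint paths in $G$ from $v$ to distinct vertices of $V^\ast$, whose internal vertices lie in $V(G) \setminus V^\ast$. Using these paths, I would modify the contraction/deletion pattern underlying $C_i$: split $v$ off as its own canonical vertex and retain three new edges it contributes along these paths. The resulting graph is still a minor of $G$ of minimum degree $\geq 3$, but has strictly more edges than $C_i$, contradicting edge-maximality.

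Once $V(C_1) = V(C_2) = V^\ast$, the edges coincide by a parallel edge-maximality argument: an unordered pair $\{u, v\} \subseteq V^\ast$ is an edge of every minor-3-core precisely when $G$ contains a $u$--$v$ path whose internal vertices all lie in $V(G) \setminus V^\ast$, since otherwise such an edge could be added to $C_i$ to gain an additional edge. Combining both agreements yields $C_1 = C_2$. The main obstacle will be the inclusion $V^\ast \subseteq V(C_i)$: producing a strictly larger minor when $V^\ast$ contains a vertex missing from $V(C_i)$ requires a careful case analysis depending on whether $v$ was deleted or contracted into some existing canonical vertex during the construction of $C_i$, together with a local re-routing of the minor model to preserve minimum degree $3$ after reinstating $v$.
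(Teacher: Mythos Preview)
Your plan is a genuine alternative to the paper's argument, and if completed it would actually prove more: your topological reduction $R(G)$ is exactly the iterative pruning process that the paper treats in the \emph{next} theorem, so you are effectively trying to establish uniqueness and the algorithmic characterisation in one shot. That ambition is also why your route is harder.

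The paper's proof is a one-paragraph symmetric contradiction. Assuming two distinct canonical minor-$3$-cores $C_1,C_2$, it forms the minor $H$ on the vertex set $V(C_1)\cup V(C_2)$ by recursively contracting every other vertex of $G$ toward this set. The disjoint-paths property of $C_1$ gives each $v\in V(C_1)$ at least three pairwise disjoint paths in $G$ to distinct $C_1$-neighbours; the first vertex of $V(C_1)\cup V(C_2)$ encountered on each path is distinct (the paths meet only in $v$), so $v$ has degree $\geq 3$ in $H$. Symmetrically for $V(C_2)$. Since $C_1$ and $C_2$ are each obtained from $H$ by at least one further contraction of a degree-$\geq 3$ vertex, each has strictly fewer edges than $H$, contradicting edge-maximality. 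The key idea you are missing is to play the two cores off against each other instead of comparing each to an externally defined object $V^\ast$.

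Your acknowledged obstacle, the inclusion $V^\ast\subseteq V(C_i)$, is real. Your sketch (``split $v$ off as its own canonical vertex'') needs the three $R(G)$-neighbours $w_1,w_2,w_3$ of $v$ to lie in $V(C_i)$, but a priori they only lie in $V^\ast$; extending the paths until they hit $V(C_i)$ may destroy internal disjointness. One also has to verify that no existing vertex of $C_i$ drops below degree $3$ after the modification, and your ``local re-routing'' is doing a lot of unspecified work here. A second, smaller issue: confluence of the reduction in the \emph{simple}-graph setting is more delicate than the multigraph case, since smoothing can create a parallel edge that is then suppressed, which may in turn lower another vertex's degree; your ``standard local-confluence argument'' must address this cascade.

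In short: your approach is plausible and would yield a stronger statement, but the paper's argument sidesteps precisely the step you flag as hard by never needing to define or analyse $V^\ast$ at all.
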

\begin{proof}
W.l.o.g. we assume $G$ to be connected. (Otherwise we would simply look at every connected component individually.) 
Suppose that there are two distinct canonical minor-3-cores $C_1$ and $C_2$. We claim that there must be a minor of $G$ that is of minimum degree $\geq 3$ and has more edges than $C_1$ and $C_2$. This minor $H$ is obtained by recursively contracting every vertex of $G$ to one of its neighbours except for the vertices $V(C_1)\cup V(C_2)$. We first prove that every vertex in $H$ must be of degree $\geq 3$. Let $v$ be a vertex of $C_1$. For every neighbour of $v$ in $C_1$, it holds that there either is an edge to the neighbour itself or on the path connecting $v$ and its neighbour in $G$ there is a vertex of $C_2$. If $u$ is the first vertex of $C_2$ on this path, then instead of having an edge to its neighbour in $C_1$, $v$ has an edge to $u$. Analogously, this holds for vertices in $C_2$. Hence, every vertex in $H$ has degree $\geq 3$. We now argue that $H$ has more edges than $C_1$ and $C_2$. Notice that $C_1$ must be a minor of $H$, since $C_1$ can be obtained by contracting every vertex of $H$ to one of its neighbours except for the vertices of $C_1$. The same holds for $C_2$ as well. Since we can assume that $C_1$ and $C_2$ are edge maximal, they have to differ by at least one vertex. Therefore, $H$ has at least one vertex more than $C_1$ and $C_2$ and since every vertex is of degree $\geq 3$, $H$ must have more edges than $C_1$ and $C_2$. This is a contradiction to $C_1$ and $C_2$ being minor-3-cores, which proves the statement of the lemma.
\end{proof}

Combining \cref{lem:m3cdpp} and \cref{lem:can-m3c-unique}, we can state that every minor-3-core $C$ of $G$ is isomorphic to the canonical minor-3-core of $G$. Thus we obtain uniqueness and can refer to \textit{the minor-3-core}, instead of a minor-3-core, meaning the canonical minor-3-core.

We will now give an algorithmic characterisation for the minor-3-core. Similarly to $k$-cores~\cite{Seidman1983} we can compute the minor-3-core by an iterative pruning process:

\begin{theorem}
Let $G$ be a graph. The minor-3-core of $G$ can be obtained by performing the following operations to exhaustion: 
\begin{itemize}
\item Let $e$ be an edge incident to a vertex $v$ with degree $\leq 2$. Contract $v$ along $e$.
\item If $v$ is an isolated vertex, delete $v$.
\end{itemize}
\end{theorem}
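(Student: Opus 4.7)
The plan is to establish the following invariant: each pruning operation preserves the family of minors of minimum degree at least $3$. Since each operation strictly decreases the number of vertices, the process terminates, and at termination every remaining vertex has degree $\geq 3$, so the output graph $H$ is a minor of $G$ with $\Delta_H(v) \geq 3$ for every $v \in V(H)$. Combined with the invariant, this will force $H$ to be an edge-maximal such minor, i.e.\ a minor-3-core of $G$, and hence by \cref{lem:m3cdpp} and \cref{lem:can-m3c-unique} isomorphic to the canonical minor-3-core.

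The main technical step is the invariant. I would use the standard \emph{model} characterization: $M$ is a minor of $G$ iff there is a family $\{H_x\}_{x \in V(M)}$ of pairwise disjoint, connected \emph{branch sets} in $G$ such that every edge $\{x,y\} \in E(M)$ is witnessed by some edge of $G$ between $H_x$ and $H_y$. Deletion of an isolated vertex $v$ is immediate, since $v$ cannot sit in any branch set of a model of a minor of minimum degree $\geq 3$. For contraction of $v$ along $e = \{v,u\}$ when $\Delta_G(v) \leq 2$, I would transform a given model $\{H_x\}$ in $G$ into a model $\{H_x'\}$ in $G/e$ as follows. If $v$ lies in no branch set, or if $v$ and $u$ lie in the same branch set, the merged vertex $[uv]$ simply inherits $u$'s place in the model. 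The only substantive case is $v \in H_x$ and $u \in H_y$ with $x \neq y$, where I would set $H_y' := (H_y \setminus \{u\}) \cup \{[uv]\}$ and $H_x' := H_x \setminus \{v\}$. Two checks are then needed. First, $H_x'$ remains nonempty and connected: if $H_x = \{v\}$, the degree of $x$ in $M$ would be at most $\Delta_G(v) \leq 2$, contradicting $\Delta_M(x) \geq 3$; otherwise $v$'s only non-$u$ neighbour $w$ must lie in $H_x$ (else $v$ would be isolated in $G[H_x]$), so $v$ is a leaf in $G[H_x]$ and its removal preserves connectivity. Second, every $M$-edge still has a witness in $G/e$: the only edge potentially affected is $\{x,y\}$ itself, whose former witness may have been $e$, but then the $G$-edge $\{v,w\}$ becomes $\{[uv], w\}$ in $G/e$ and witnesses $\{x,y\}$ between $H_y'$ and $H_x'$. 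All other $M$-edges retain their witnesses since $v$'s only neighbour outside $H_x$ is $u$, so any witness other than $e$ already avoided $v$.

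Once the invariant is established, the conclusion is short. Let $C$ be the canonical minor-3-core of $G$. Since $H$ is a minor of $G$ of minimum degree $\geq 3$, edge-maximality of $C$ gives $|E(H)| \leq |E(C)|$. Iterating the invariant along the sequence of prunings, $C$ is also a minor of $H$, and since any minor of a simple graph has at most as many edges as the host, $|E(C)| \leq |E(H)|$. Therefore $|E(H)| = |E(C)|$, so $H$ is itself a minor-3-core of $G$ and is isomorphic to $C$ by uniqueness of the canonical minor-3-core. The main obstacle I expect is the degree-$2$ subcase above with $v$ and $u$ in different branch sets: one must re-route both the internal connectivity of the affected branch sets and the external witnesses of every $M$-edge simultaneously, and it is exactly there that the hypothesis $\Delta_G(v) \leq 2$ is essential (it forces $w \in H_x$ and limits the ways in which $v$ could have served as a witness endpoint).
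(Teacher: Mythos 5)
Your proof is correct, but it takes a genuinely different route from the paper's. You establish a local invariant --- each pruning step preserves \emph{every} minor of minimum degree $\geq 3$, verified via the branch-set model of minors --- and then finish by edge-counting against the canonical minor-3-core $C$ and invoking \cref{lem:can-m3c-unique}. The paper instead argues globally and non-constructively: it supposes some edge of $C$ is missing from the pruned graph $H$, takes the \emph{first} canonical vertex $u$ of $C$ that gets contracted, and uses the disjoint-paths property to show that $u$ could not have dropped to degree $\leq 2$ unless two other canonical vertices were contracted together earlier, a contradiction. Your invariant is the stronger and more modular statement (it immediately gives confluence of the pruning order and that $C$ is a minor of $H$, not just that $E(C)\subseteq E(H)$), and your case analysis of how a degree-$\leq 2$ contraction interacts with a model --- in particular that $H_x=\{v\}$ would force $\deg_M(x)\leq 2$, and that otherwise $v$ is a leaf of $G[H_x]$ with its unique outgoing witness $e$ replaceable by $\{[uv],w\}$ --- is exactly the right use of the degree hypothesis. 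One small completeness point: your case split omits the configuration where $v$ lies in a branch set $H_x$ but $u$ lies in none; there $[uv]$ should inherit $v$'s place rather than $u$'s, which works for the same (easy) reasons as your first two cases. With that noted, the argument goes through.
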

\begin{proof}
The graph $H$ resulting from the iterative pruning process described above is clearly a minor of $G$, in which every vertex has degree at least 3. Let $C$ be the canonical minor-3-core of $G$. Suppose that there is some edge in $C$ that is not in $H$. For an edge to not be an edge of $H$ one of the vertices of this edge needs to be contracted to one of its neighbours in the construction of $H$. Let $u$ be the first vertex of $C$ that is contracted in the construction of $H$. By definition of the pruning process $u$ needs to be of degree $\leq 2$ to be contracted. Since $u$ by the definition of the disjoint-paths property starts off as a vertex of degree $\geq 3$, there needs to be a contraction that reduces the degree of $u$ to $\leq 2$. For this to be possible, there needs to be a contraction of two neighbouring vertices of $u$, where both of these vertices are canonical vertices. This must be the case, since in $G$ there are at least three disjoint paths starting in $u$ that lead to at least three different canonical vertices and there cannot be an edge between the vertices of the disjoint paths, since otherwise these vertices would be canonical vertices themselves. The vertices of this edge would need to be canonical as they are both connected to $u$, to one other canonical vertex each, as well as to each other by disjoint paths. But since we assumed $u$ to be the first canonical vertex that gets contracted in the construction of $H$, we reach a contradiction. Hence every edge of $C$ is also an edge of $H$ and since $C$ is edge maximal, $H$ must be equal to the canonical minor-3-core $C$.
\end{proof}

In the following we use the minor-3-core in order to bound the tree-width of planar spanners. To begin the exploration of the relationship of minor-3-cores and tree-width we first make a couple of observations.

\begin{enumerate}
\item The minor-3-core of a tree is the empty graph.
\item The minor-3-core of a graph with at most one circle is the empty graph.
\item The $K_4$ is its own minor-3-core. 
\end{enumerate}

Even more:

\begin{remark}
\label{remark1}
Every graph without a $K_4$ minor has an empty minor-3-core.
\end{remark}

\begin{proof}
A graph has no $K_4$ minor if and only if every biconnected component is a series parallel graph. The graph can therefore be decomposed into a tree of series parallel graphs. Starting at the leaves we can recursivly contract the biconnected components to a single vertex of the parent biconnected component, by contracting the vertices of degree $\leq 2$ in the series parallel biconnected component. We end up with the root biconnected component, which of course is series parallel and therefore has an empty minor-3-core.
\end{proof}

Considering a graph and then connecting a tree to some vertex of our graph, like a branch growing out of our graph, does not increase the tree-width of the graph. The same holds for edges that we replace with paths. So we can see that it is possible to increase the size of our graph without affecting the tree-width of the graph. This is a problem if we want to bound the tree-width of our graph through common separator arguments, for example using~\cite{Robertson1994}, since the bound on the tree-width there depends on the size of the graph. Hence it would be nice to have a tool that prunes our graph by removing “tree-like” structures from our graph. For that, we  use the definition of minor-3-cores: 

\begin{theorem}
\label{lem:3coretw}
Given a graph of tree-width $k\geq 3$, the tree-width of the minor-3-core of $G$ is $k$ as well.
\end{theorem}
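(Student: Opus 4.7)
The plan is to prove both inequalities $\tw(C) \leq k$ and $\tw(C) \geq k$ for the minor-3-core $C$ of $G$. The upper bound is immediate: by the pruning characterisation established just above, $C$ is a minor of $G$, and tree-width is monotone under minors, so $\tw(C) \leq \tw(G) = k$.

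For the lower bound I would prove, by induction on the number of elementary pruning steps, the invariant that each step preserves the tree-width as long as the tree-width is at least $3$. Concretely, suppose $H$ has $\tw(H) \geq 3$ and $H'$ is obtained from $H$ by one elementary step; I need to show $\tw(H') \geq \tw(H)$, as the converse already holds by minor monotonicity. Given any tree decomposition $(T', X')$ of $H'$ of width $w$, I would lift it to a tree decomposition of $H$ of width $\max(w, 2)$ as follows. For an isolated vertex $v$ (the deletion case), attach a new leaf bag $\{v\}$. For a degree-$1$ vertex $v$ with unique neighbor $u$, attach a leaf bag $\{u,v\}$ to any bag containing $u$. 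For a degree-$2$ vertex $v$ contracted along $\{v,u\}$ with other neighbor $w$, the contraction creates the edge $\{u,w\}$ in $H'$ (or merges it with a pre-existing one), so some bag of $T'$ contains both $u$ and $w$; attach a leaf bag $\{u,v,w\}$ there. In each case the standard tree-decomposition axioms are easy to verify, because the new vertex $v$ occurs only in the new bag and all edges incident to $v$ in $H$ are covered by it.

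Combining this construction with the assumption $\tw(H') < k$ gives a tree decomposition of $H$ of width $\max(k-1, 2) = k-1$, where the equality uses $k \geq 3$; this contradicts $\tw(H) = k$. Hence $\tw(H') = k$, and iterating this invariant along the full pruning sequence from $G$ to $C$ yields $\tw(C) = k$.

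The main subtle point, rather than a genuine obstacle, is the necessity of $k \geq 3$: because the lifted bag can have size $3$, the argument breaks at $k = 2$, and indeed a triangle (tree-width $2$) prunes to an edge (tree-width $1$), so the statement genuinely fails for smaller $k$. One also has to be mindful in the degree-$2$ case that $u \neq w$ (guaranteed as $H$ is simple) and that the edge $\{u,w\}$ really does appear in $H'$, which is exactly what the contraction produces.
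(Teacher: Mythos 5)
Your proposal is correct and follows essentially the same route as the paper: tree-width cannot increase under minors, and each elementary pruning step is reversed by attaching a leaf bag ($\{u,v,w\}$ in the degree-$2$ case) to a bag witnessing the created edge, which cannot raise the width when $k\geq 3$. Your explicit treatment of all three pruning cases and of the $\max(w,2)$ bookkeeping is slightly more detailed than the paper's, but the key idea is identical.
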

\begin{proof}
Let $G$ be a graph of tree-width $k\geq 3$ and $H$ its minor-3-core. Since $H$ is a minor of $G$, the tree-width of $H$ can be at most $k$. To prove that $H$ must be of tree-width at least $k$, we prove that contracting a vertex of degree at most $2$ does not decrease the tree-width. So let $G$ be a graph of tree-width $k\geq 3$. Suppose that there is a vertex $v$ of degree $2$ and let $u_1$ and $u_2$ be the two vertices adjacent to $v$. Let $G'$ be the graph resulting from the contraction of $v$ to $u_1$. We now argue that $G'$ must have tree-width $k$ as well. Assume that $G'$ has tree-width $k'<k$ and let $(T',X')$ be a tree-decomposition of width $k'$ for $G'$. We now construct a tree-decomposition $(T,X)$ for $G$ of width $k'$ from $(T',X')$. In $(T',X')$ there must be some bag $X(i)$ which contains both $u_1$ and $u_2$, since they are connected by an edge. We now create a new node, whose bag contains $v$, $u_1$ and $u_2$ and connect it to $i$ by an edge. The tree-decomposition $(T,X)$ is now a valid tree-decomposition for 
$G$ 
with tree-width $k'<k$, which is a contradiction to $G$ being of tree-width $k$. If $v$ is of degree $1$, the same argument holds, but we only consider $u_1$, which has to appear in at least one bag of $(T',X')$.
\end{proof}

As mentioned before we want to bound the tree-width of graphs (especially spanners) through the size of their minor-3-core. Therefore it would be useful to have a bound on the size of the minor-3-core of a graph. If a graph is very “tree-like”, i.e. the graph can be decomposed into a set of trees by removing only a few edges, we expect it to have a small minor-3-core. We formalise this statement as follows:
\begin{lemma}
\label{lem:3coresize}
Let $G$ be a connected graph with $n$ vertices and $n-1+m$ edges, where $m\geq 1$ is some integer. The minor-3-core of $G$ has at most $2\cdot (m-1)$ vertices.
\end{lemma}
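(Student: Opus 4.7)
The plan is to bound the number of edges of the minor-3-core from both sides: from below by using its minimum degree~$3$, and from above by tracking how the edge count decreases during the pruning process that produces it.

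First I would handle the trivial case: if the minor-3-core $C$ of $G$ is empty then $|V(C)|=0\leq 2(m-1)$ holds whenever $m\geq 1$, so we may assume $C$ is nonempty. Let $v = |V(C)|$ and $e = |E(C)|$. Since $C$ has minimum degree at least~$3$, a standard degree-sum argument gives $2e \geq 3v$, i.e.\ $e \geq 3v/2$.

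Next I would obtain the matching upper bound on $e$ via the pruning characterisation of the minor-3-core. The key observation is that contractions preserve connectivity, so since $G$ is connected the graph stays connected throughout the pruning process; in particular, no isolated vertex can ever appear before the process terminates at the nonempty graph $C$, and so no vertex-deletion step is ever applied. Consequently the process consists of exactly $n - v$ contractions of vertices of degree $\leq 2$. I would then check that each such contraction strictly decreases the edge count by at least one: contracting a degree-$1$ vertex along its unique incident edge removes that edge (and nothing else), while contracting a degree-$2$ vertex $u$ (with neighbours $u_1,u_2$) along $\{u,u_1\}$ removes $\{u,u_1\}$ and either leaves $\{u,u_2\}$ as the new edge $\{u_1,u_2\}$ or merges it with a pre-existing edge $\{u_1,u_2\}$, giving a net loss of $1$ or $2$ edges. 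In either case the edge count drops by at least one per contraction, so
\[ e \;\leq\; (n - 1 + m) - (n - v) \;=\; v + m - 1. \]

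Finally I would combine the two bounds: $3v/2 \leq e \leq v + m - 1$ yields $v/2 \leq m-1$, i.e.\ $v \leq 2(m-1)$, as required. I do not expect a serious obstacle here; the only subtlety is ensuring that the pruning process indeed performs exactly $n - v$ strict edge-losing contractions and no deletions, which follows from connectivity of $G$ together with the fact that, by Lemma~\ref{lem:can-m3c-unique}, the pruning produces the (unique) canonical minor-3-core regardless of the order of operations.
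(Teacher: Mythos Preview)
Your proof is correct and takes a genuinely different route from the paper. The paper argues by induction on $m$: for $m=1$ the graph has at most one cycle and hence empty minor-3-core; for the step it adds one edge to a graph with $n-1+(m-1)$ edges and argues that during the pruning process at most two additional vertices can be ``spared'' (since the inserted edge raises the degree of only two vertices, and this property propagates through contractions), giving $2(m-2)+2=2(m-1)$.

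Your argument instead bounds $|E(C)|$ directly from both sides: the handshake bound $|E(C)|\ge 3|V(C)|/2$ from the minimum-degree-$3$ condition, and $|E(C)|\le |V(C)|+m-1$ from the observation that each of the $n-|V(C)|$ contraction steps in the pruning loses at least one edge (and connectivity rules out any deletion steps when $C$ is nonempty). This is more elementary and self-contained: it avoids the somewhat delicate tracking of ``which two vertices currently carry the extra degree'' in the paper's inductive step, and it makes transparent why the constant is exactly~$2$ (it is $\tfrac{1}{3/2-1}$). The paper's approach, on the other hand, gives a slightly finer picture of how the minor-3-core grows as edges are added one at a time.
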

\begin{proof}

We prove the lemma by induction on $m$. If $m=1$ then $G$ can contain at most one circle. The minor-3-core of $G$ therefore must be the empty graph. Now let $G$ be a graph with $n-1+m-1$ edges, where $m> 1$. By the induction hypothesis the minor-3-core of $G$ has at most $2\cdot (m-2)$ vertices. If we insert an edge $e$ into $G$, only the degree of the two incident vertices of $e$ will increase. We claim that in the pruning process constructing the minor-3-core at any point in time there are only two vertices, whose degree increased by the insertion of $e$. This holds true, since by contracting a vertex $u$ along an edge $\{u,v\}$ only the degree of $v$ may increase and the vertex $u$ is deleted. Therefore the number of vertices that were spared from being contracted in the pruning process constructing the minor-3-core after the insertion of $e$, is at most two. Hence the minor-3-core of $G$ after the insertion of $e$ has at most $2\cdot (m-2)+2=2\cdot (m-1)$ vertices.
\end{proof}

This lemma is the last part we need to prove an upper bound for the tree-width of connected planar graphs. Intuitively the statement of the lemma could just as well be stated as a bound on the size of minor-3-cores of trees with an additional $m$ edges. If we place these additional edges in a planar fashion, we can combine our bound on the size of minor-3-cores and the bound on the tree-width of planar graphs derived from Theorem 6.2 in~\cite{Robertson1994} to obtain a bound on the tree-width of trees with additional edges that were placed without creating edge crossings.
Using this, we can finally proof our second main result: 

\begin{figure}[htbp]
  \centering
  \includegraphics[page=6]{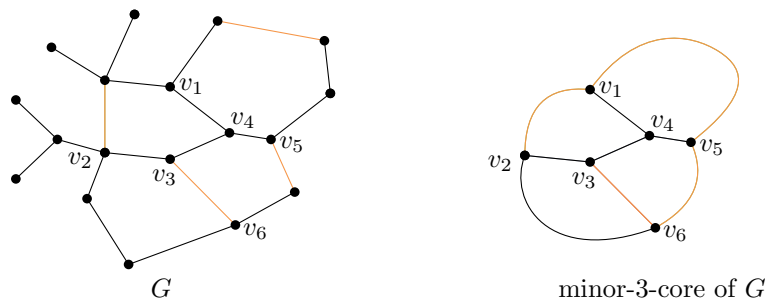}
  \caption{Minor-3-core of a graph $G$ that consists of a tree with 4 additional edges (orange).}
  \label{fig:m3c}
\end{figure}

\begin{proof}[Proof of \cref{thm:twtreeaddedges}]
Since $G$ is a connected graph with $n+\lfloor (k-1)^2/72 \rfloor$ edges, its minor-3-core must by Lemma \ref{lem:3coresize} have at most $\lfloor (k-1)^2/36 \rfloor$ vertices. Also, since the minor-3-core is a minor of $G$ and $G$ is planar, the minor-3-core must be planar as well. The tree-width of a planar graph with $n$ vertices by~\cite{Robertson1994} is at most $6\sqrt{n}+1$. The tree-width of the minor-3-core of $G$ therefore is
\[6\sqrt{\left\lfloor (k-1)^2/36 \right\rfloor}+1\leq 6\sqrt{(k-1)^2/36 }+1\leq k.\]
By Lemma \ref{lem:3coretw} we can now conclude that the tree-width of $G$ is either $\leq 2$ or exactly $k$, so in both cases at most $k$.
\end{proof}

\subsection{Tree-width of planar spanners} \label{subsec:tw-planar}

\cref{thm:twtreeaddedges} allows us a stronger result than \cref{thm:upper bound higher dim} in the Euclidean plane: Adjusting the algorithm given in \cite{Aronov2008}, we obtain a plane bounded-degree spanner with bounded tree-width:

\corPlaneTreewidthSpanner*

Our main adjustment to the algorithm is to chose a plane constant-dilation spanner $\mathcal{S}$ of $P$ with maximum degree $4$, as provided in \cite{kanj.2017}, instead of the Delaunay triangulation and to utilise a minimum spanning tree of $\mathcal{S}$ instead of the Euclidean MST.

\begin{algorithm}[H] \caption{$\texttt{PlaneBoundedTreeWidthSpanner}(P,k)$}
\label{alg:2D}
\begin{algorithmic}[1]
\Require a set of $n$ points $P\subset \mathbb{R}^2$ and a positive integer $k\leq 12\sqrt{n-3}$
\Ensure a geometric spanner $G=(P,E)$ of tree-width $k$ and dilation $\mathcal{O}(n/k^2)$
\State $\mathcal{S} \leftarrow $ Plane $20$-spanner on $P$ with max degree $4$
\State $\mathcal{MST} \leftarrow $ minimum spanning tree of $\mathcal{S}$
\If {$k=1$}
  \State { \Return $\mathcal{MST}$}
\EndIf
\State $m \leftarrow \left\lfloor (k-1)^2/144 \right\rfloor + 3$
\State Compute $m$ disjoint subtrees $\mathcal{T}$ of $\mathcal{MST}$, each containing $\mathcal{O}(n/m)$ points.
\State $E\leftarrow \bigcup_{T\in\mathcal{T}}E(T)$
\ForEach {$T, T'\in \mathcal{T}$ where $T\neq T'$}
 \If {there is some $\{p,q\}\in E(\mathcal{S})$ where $p\in V(T)$ and $q\in V(T')$}
  \State {Add the shortest edge $\{p',q'\}\in E(\mathcal{S})$ where $p'\in V(T)$ and $q'\in V(T')$ to $E$.}
\EndIf
\EndFor
\State \Return $G=(P,E)$
\end{algorithmic}
\end{algorithm}

Since the spanner $G$ constructed by \cref{alg:2D} is a subgraph of $\mathcal{S}$, it is plane and has maximum vertex degree $4$.
The total number of edges of $G$ is at most
\[n-1-(m-1)+3m-6=2m-6\leq n+\left\lfloor \frac{(k-1)^2}{72}\right\rfloor ,\] since between every pair of subtrees there is at most one edge and the graph obtained by contracting every subtree of $G$ to a single vertex is planar. The tree-width therefore follows from \cref{thm:twtreeaddedges}. 

For the dilation, we use Lemma 4 of \cite{Aronov2008}, which states that for the MST of an arbitrary $t$-spanner, every edge on the path in the MST between two points $p$ and $q$ has length $\leq t \cdot |pq|$. 
We consider $p$ and $q$ such that there is an edge between $p$ and $q$ in $\mathcal{S}$, but not in $G$. If $p$ and $q$ are in the same subtree, then there is a path of length $t \cdot |pq| \cdot \mathcal{O}(n/m)$ between $p$ and $q$. If $p$ and $q$ are in different subtrees, then we consider a path in $\mathcal{MST}$ between $p$ and $q$. Let $r$ be the last vertex in the subtree of $p$ and $r'$ the first vertex in the subtree of $q$ on this path. The two subpaths from $p$ to $r$ and from $r'$ to $q$ contain $\bigO(n/m)$ edges, which have by \cite{Aronov2008} at most length $t\cdot |pq|$. Let $\{p',q'\}$ be the shortest edge in $G$ between both subtrees. Since $|rr'| \leq |pq|$, all edges on the path from $r$ to $p'$ and all edges on the path from $q'$ to $r'$ have length $\leq t \cdot |pq|$. Thus we can construct a detour of length at most $d(p,r)+d(r,p')+|p'q'|+d(q',r')+d(r',q)=\bigO(n/m) \cdot |pq|$.
We then obtain an overall dilation of $\bigO(n/m)$.

Since $\mathcal{S}$ and its MST can be computed in $\mathcal{O}(n \log n)$ \cite{kanj.2017}, and the computation of the subtrees runs in $\bigO(n \log n)$ by \cite{Aronov2008}, \cref{alg:2D} has a total running time of $\bigO(n \log n)$.

\section{Bounded tree-width spanners on convex point sets} \label{sec:convex}

We have seen that for unrestricted sets of points in the plane the Delaunay triangulation is an $1.998$-spanner of tree-width $6\sqrt{n}+1$. To obtain spanners with smaller tree-widths, we now look at sets of points in the plane that are convex or even lie on a circle - and by that also answer an open question stated in~\cite{Aronov2008}.

\subsection{Tree spanners}
The Euclidean minimum spanning tree is an $(n-1)$-spanner for any set of $n$ points in $\mathbb{R}^d$ of tree-width $1$. Aronov et. al.~\cite{Aronov2008} proved that there are sets of points, on which the dilation bound of the minimum spanning tree is almost optimal. To be more precise they showed that there is a set of points, on which any tree-spanner has dilation at least $\frac{2n}{\pi}-1$. The set they used in their proof was a set of $n\geq 3$ points equally spaced on the unit circle. So, even for tree-spanners on convex sets and sets of points on a circle a dilation of $\Omega(n)$ may be unavoidable. The most obvious tree-spanner on a set of points equally spaced on the unit circle would probably be the spanner that consists of all the edges between points next to each other except for one edge. It is not hard to see that this spanner has dilation $n-1$. But we can show that for points equally spaced on the unit circle, there are tree-spanners with dilation almost $\frac{2n}{\pi}-1$. This answers the open question stated in~\cite{Aronov2008}.

\begin{lemma}
\label{lem:sawtoothspanner}
Let $P$ be a set of $n$ points equally spaced on a circle. There is a tree-spanner on $P$ with dilation $\leq \frac{2n}{\pi}+\frac{\pi}{2n}$.
\end{lemma}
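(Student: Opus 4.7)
The plan is to let $T$ be the Hamiltonian path visiting the points of $P$ in the sawtooth order $p_0, p_{n-1}, p_1, p_{n-2}, p_2, p_{n-3}, \ldots$, alternating between the two halves of the circle. Writing this path as $q_0, q_1, \ldots, q_{n-1}$ with $q_{2k}=p_k$ and $q_{2k+1}=p_{n-k-1}$, the crucial property is that the $k$-th edge $\{q_{k-1}, q_k\}$ connects two points whose cyclic indices on the circle differ by exactly $k$; hence its Euclidean length equals $2\sin(k\pi/n)$ for every $k \in \{1, \ldots, n-1\}$ (using $\sin(\pi-x)=\sin x$), independent of the parity of $n$.

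\textbf{Dilation analysis.} For $p_i = q_a$ and $p_j = q_b$ with $a < b$, the path distance in $T$ is
\[
d_T(p_i,p_j) \;=\; \sum_{k=a+1}^{b} 2\sin(k\pi/n) \;=\; \frac{2\,\sin((b-a)\pi/(2n))\,\sin((a+b+1)\pi/(2n))}{\sin(\pi/(2n))},
\]
by the standard closed form for a sum of equally spaced sines. I would then split into two cases by the parity of $a$ and $b$. If $a$ and $b$ have the same parity, both points lie in the same ``half'' of the sawtooth and $|p_i p_j| = 2\sin((b-a)\pi/(2n))$; if they have opposite parity, a short index calculation (using $\sin(\pi-x)=\sin x$ once more) gives $|p_i p_j| = 2\sin((a+b+1)\pi/(2n))$. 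In either case exactly one of the two sine factors appearing in $d_T(p_i,p_j)$ cancels against $|p_ip_j|$, and since the remaining factor is at most $1$,
\[
\frac{d_T(p_i,p_j)}{|p_i p_j|} \;\leq\; \frac{1}{\sin(\pi/(2n))}.
\]

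\textbf{Final estimate and main obstacle.} Using the elementary inequality $\sin(x) \geq x - x^3/6$ on $[0,\pi]$, one checks that $1/\sin(x) \leq 1/x + x$ for all $x \leq \sqrt{5}$ (equivalently, $(1/x+x)(x-x^3/6) = 1 + 5x^2/6 - x^4/6 \geq 1$); evaluating at $x = \pi/(2n)$ yields $1/\sin(\pi/(2n)) \leq 2n/\pi + \pi/(2n)$, which is the claimed bound. The main obstacle is spotting the cancellation in the dilation step: the telescoping sum naturally produces two sine factors, and it is the interplay with the parity-dependent form of the Euclidean distance that makes exactly one of them disappear, giving a dilation bound uniform in $a$ and $b$ rather than one that grows with the position of the pair along the sawtooth.
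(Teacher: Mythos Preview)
Your proof is correct and follows essentially the same route as the paper: both build the sawtooth Hamiltonian path, evaluate the path distance as $\sum 2\sin(k\pi/n)$, and reduce the dilation to $1/\sin(\pi/(2n))\le 2n/\pi+\pi/(2n)$. The only real difference is in handling $|q_aq_b|$: the paper derives a single formula and then passes to the lower bound $|q_aq_b|\ge 2\sin\bigl((b-a)\pi/(2n)\bigr)$ before cancelling against the cosine-difference form of the sum, whereas your parity split produces the exact value $2\sin\bigl((b-a)\pi/(2n)\bigr)$ or $2\sin\bigl((a+b+1)\pi/(2n)\bigr)$ and cancels it directly against the product form of the sum---a slightly cleaner bookkeeping that also makes the final inequality $1/\sin x\le 1/x+x$ explicit rather than asserted.
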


\begin{proof}

Given some $n\geq 2$, let $p_i=\left(\sin \left(\frac{2\pi i}{n}\right), \cos\left(\frac{2\pi i}{n}\right)\right)$. We define $P=\left\{ p_i  \mid i \in \{0, \dots, n-1\}\right\}$ which is the set of $n$ points equally spaced on the unit circle, where one point is $(0,1)$. Notice that for any $i$ we have $p_i=p_{i+n}$.

We now construct a tree-spanner $G$ for $P$ with dilation $\leq \frac{2n}{\pi}+\frac{\pi}{2n}$. The spanner $G$ is a path visiting each point on the unit circle in a sawtooth pattern. We define $G$ as the path $p_0,p_{n-1},p_1,\dots , p_{\lfloor n/2 \rfloor}$ or more formally as the graph with edges
\[E=\{\{p_i,p_{n-1-i}\} \mid i\in \{0, \dots,  \lfloor n/2\rfloor-1 \}\} \cup \{\{p_{i+1},p_{n-i}\} \mid i\in \{0, \dots, \lfloor (n-1)/2\rfloor-1 \}\}.\]

\begin{figure}[htbp]
  \centering
  \includegraphics[width=250pt]{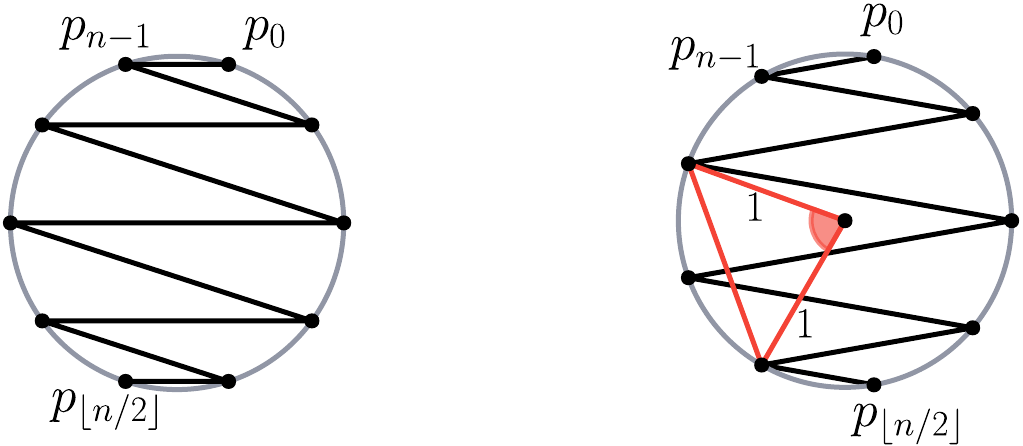}
  \caption{The spanner $G$ on $n$ points equally spaced on the unit circle for $n=10$ (left) and for $n=9$ (right).}
\end{figure}

The Euclidean distance between two points $p_i$ and $p_j$ can be derived from the law of cosines:

\[\vert p_i p_j\vert = \sqrt{2-2 \cos \left(\frac{2\pi}{n} (j-i) \right) } = \left\vert 2 \sin \left((j-i)\cdot \frac{\pi}{n}  \right) \right\vert .\]

For the distance in $G$ between two points we can add up the Euclidean distances between the points on the path connecting them. Since the spanner $G$ is a path from $p_0$ to $p_{\lfloor n/2\rfloor}$,  we now let $q_1,...,q_n$ be the vertices along the path $G$, where $q_1=p_0$ and $q_{n}=p_{\lfloor n/2 \rfloor}$. The distance in $G$ from some $q_i$ to $q_j$ for $i\leq j$ then is
\[d(q_i,q_j)=\sum_{k=i}^{j-1} 2\sin \left( \frac{\pi}{n}\cdot k\right).\]

To calculate the Euclidean distance between two points $q_i$ and $q_j$ we cannot simply use the formula given above, since it depends on the indices $i'$ and $j'$ of the corresponding points $p_{i'}=q_i$ and $p_{j'}=q_j$. To derive a formula for the Euclidean distance given some $q_i$ and $q_j$ we first make the following two observations. For $i\in \{1, \dots,  n\}$ we have that if $i$ is odd then $q_{i}=p_{\lfloor i/2 \rfloor}$ and if $i$ is even then $q_{i}=p_{n-\lfloor i / 2\rfloor}$. Notice that for $i\leq \lfloor n/2\rfloor$ and $j > \lfloor n/2\rfloor$ it holds that $\angle p_ip_j = \angle p_ip_{n-j}$. From this we can deduce that $\vert q_iq_j\vert = \vert p_{\lfloor i/2\rfloor} p_{\lfloor j/2\rfloor}\vert$. The Euclidean distance between two vertices $q_i$ and $q_j$ therefore is
\[\vert q_i q_j\vert = \left\vert 2 \sin \left( \left\lfloor \frac{j-i+1}{2}\right\rfloor \cdot \frac{\pi}{n} \right)\right\vert.\]

Finally, to prove Lemma \ref{lem:sawtoothspanner} we still need two trigonometric identities. These two identities are the Lagrange trigonometric identity~\cite{Jeffrey2008}:
\begin{equation}
\label{eq:lagrange}
\sum_{k=0}^n 2\sin (k\theta) = \frac{\cos \left(\frac{1}{2}\theta \right)-\cos \left(\left(n+\frac{1}{2}\right)\theta\right)}{\sin \left(\frac{1}{2}\theta \right)}
\end{equation}
and the product-to-sum identity~\cite{Abramowitz1988}:
\begin{equation}
\label{eq:sum-to-product}
2 \sin \theta \sin \varphi = \cos(\theta - \varphi) - \cos (\theta + \varphi).
\end{equation}

To prove the lemma we first simplify the formula for the distance between points $q_i$ and $q_j$ in $G$ and their Euclidean distance using the trigonometric identities above. We assume that $i\leq j$.

\begin{align*}
d(q_i,q_j)=\ & \sum_{k=i}^{j-1}2\sin\left(\frac{\pi}{n}\cdot k\right)=\sum_{k=0}^{j-1}2\sin\left(\frac{\pi}{n}\cdot k\right)-\sum_{k=0}^{i-1}2 \sin\left(\frac{\pi}{n}\cdot k\right)\\
\stackrel{\text{(\ref{eq:lagrange})}}{\leq} & \ \frac{ \cos\left(\frac{\pi}{2n}\right) -\cos\left( \left(j-1+\frac{1}{2}\right)\cdot\frac{\pi}{n} \right) - \cos\left(\frac{\pi}{2n}\right) +\cos\left( \left(i-1+\frac{1}{2}\right)\cdot\frac{\pi}{n} \right)  }{\sin \left(\frac{\pi}{2n}\right)}\\
\leq& \ \frac{  \cos\left( \left(i- \frac{1}{2}\right) \cdot \frac{\pi}{n} \right) -\cos\left( \left(j- \frac{1}{2}\right) \cdot \frac{\pi}{n} \right)  }{\sin \left(\frac{\pi}{2n}\right)}
\end{align*}

And for the Euclidean distance, assuming $i\leq j$, we have:

\begin{align*}
\vert q_i q_j\vert& = \ 2 \sin \left(\left\lceil \frac{j-i+1}{2}\right\rceil\frac{\pi}{n}\right) \geq 2 \sin \left( \frac{j-i}{2}\cdot \frac{\pi}{n}\right)\\
&\stackrel{\text{(\ref{eq:sum-to-product})}}{=} \ \frac{\cos \left( \left(i-\frac{1}{2}\right)\cdot \frac{\pi}{n}\right) - \cos \left( \left(j-\frac{1}{2}\right)\cdot \frac{\pi}{n}\right)}{\sin \left( (j+i-1)\cdot \frac{\pi}{2n} \right)} \\
&\geq \cos \left( \left(i-\frac{1}{2}\right)\cdot \frac{\pi}{n}\right) - \cos \left( \left(j-\frac{1}{2}\right) \cdot \frac{\pi}{n}\right).
\end{align*}
The dilation of the spanner $G$ is therefore bounded by
\[\frac{d(q_i,q_j)}{\vert q_i q_j \vert} \leq \frac{  \cos\left( \left(j- \frac{1}{2}\right) \cdot \frac{\pi}{n} \right) -\cos\left( \left(k- \frac{1}{2}\right) \cdot \frac{\pi}{n} \right)  }{  \sin \left(\frac{\pi}{2n}\right) \cdot   \left(\cos\left( \left(j- \frac{1}{2}\right) \cdot \frac{\pi}{n} \right) -\cos\left( \left(k- \frac{1}{2}\right) \cdot \frac{\pi}{n}\right) \right)    }              = \frac{1}{\sin \left( \frac{\pi}{2n}\right)} \leq \frac{2}{\pi}n+\frac{\pi}{2n} ,\]
which proves Lemma \ref{lem:sawtoothspanner}.
\end{proof}

\subsection{Spanners of tree-width 2}

While for geometric spanners of tree-width $1$, there are sets of points in convex position on which a dilation of $\Omega (n)$ is unavoidable, we can obtain some results for geometric spanners of tree-width $2$ on sets of points in convex position.

As every plane graph on a set of points in convex position is outerplanar, and outerplanar graphs have tree-width $\leq 2$~\cite{Dinneen1995}, every plane spanner on a set of points in convex position has tree-width $\leq 2$.
Let $P$ be a point set with points in convex position. Since by~\cite{Amani2016}, there is a plane $1.88$-spanner for $P$, there is a $1.88$-spanner of tree-width $2$ for $P$.  
If the points are equally spaced on a circle, by~\cite{Sattari2019} there is even a $1.4482$-spanner with tree-width $2$.

To show a lower bound, we build on the idea of~\cite{Dumitrescu2016}:

\begin{lemma}
\label{lem:spannerconvexhull}
Let $P$ be the set of $n\geq 3$ points equally spaced on the unit circle. Every spanner on $P$ with dilation $< 2$ must contain the convex hull of $P$.
\end{lemma}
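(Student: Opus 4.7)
\textbf{Proof plan for \cref{lem:spannerconvexhull}.} The plan is to argue by contradiction: assume that the spanner $G$ omits some convex-hull edge, say $\{p_i, p_{i+1}\}$, and exhibit a pair of vertices whose $G$-distance is at least twice their Euclidean distance. Place the points at angles $0, \theta, 2\theta, \ldots, (n-1)\theta$ on the unit circle, where $\theta = 2\pi/n$, with $p_i$ at angle $0$ and $p_{i+1}$ at angle $\theta$. Since the short arc between $p_i$ and $p_{i+1}$ contains no other point of $P$, every $p_j \in P\setminus\{p_i,p_{i+1}\}$ sits at angle $j\theta$ for some $j\in\{2,\dots,n-1\}$, and the chord-length formula gives $|p_i p_j| = 2\sin(j\theta/2)$ and $|p_{i+1}p_j|=2\sin((j-1)\theta/2)$ (using $\sin(\pi-x)=\sin x$ when $j\theta/2$ exceeds $\pi/2$).

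Next, I reduce the statement about \emph{any} $p_i$-to-$p_{i+1}$ path in $G$ to a two-edge inequality. Since $\{p_i,p_{i+1}\} \notin E(G)$, any such path has at least two edges; let $p_j$ be its first interior vertex. Applying the triangle inequality to the tail of the path (from $p_j$ to $p_{i+1}$) shows the total length is at least $|p_i p_j|+|p_j p_{i+1}|$. Hence it suffices to prove, for every admissible $j$, the inequality
\[
|p_i p_j| + |p_j p_{i+1}| \ \geq\ 2\,|p_i p_{i+1}|,
\]
which, via the chord formulas, is equivalent to $\sin(j\theta/2)+\sin((j-1)\theta/2) \geq 2\sin(\theta/2)$.

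The key calculation, and likely the main obstacle, is this trigonometric inequality. Using the sum-to-product identity the left side becomes $2\sin((2j-1)\theta/4)\cos(\theta/4)$ and the right side equals $4\sin(\theta/4)\cos(\theta/4)$, so after cancelling $\cos(\theta/4)>0$ the claim reduces to
\[
\sin\!\Bigl(\tfrac{(2j-1)\theta}{4}\Bigr) \ \geq\ 2\sin\!\Bigl(\tfrac{\theta}{4}\Bigr).
\]
I would observe that the map $j\mapsto \sin((2j-1)\theta/4)$ is symmetric about $j=(n{+}1)/2$ (since $(2j-1)\theta/4+(2(n-j)+1)\theta/4 = \pi$), so its minimum over $j\in\{2,\dots,n-1\}$ is attained at the endpoints $j=2$ and $j=n-1$, where both evaluate to $\sin(3\pi/(2n))$. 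Applying $\sin(3x)=3\sin x - 4\sin^3 x$ with $x=\pi/(2n)$ turns the inequality into $\sin(\pi/(2n))(1-4\sin^2(\pi/(2n))) \geq 0$, which holds precisely when $\sin(\pi/(2n))\leq 1/2$, i.e., when $n\geq 3$. This closes the argument and forces $G$ to contain every convex-hull edge whenever its dilation is less than $2$.
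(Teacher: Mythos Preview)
Your proposal is correct and follows essentially the same approach as the paper: assume a hull edge $\{p,q\}$ is missing, lower-bound any $p$--$q$ path by $|pr|+|rq|$ for its first interior vertex $r$, and show this is at least $2|pq|$. The paper simply asserts that the shortest such detour goes through a neighbour of $p$ (or $q$), giving ratio $(\sin(\pi/n)+\sin(2\pi/n))/\sin(\pi/n)\geq 2$; you instead verify explicitly that the minimum of $|p_ir|+|rp_{i+1}|$ over all $r$ occurs at $j\in\{2,n-1\}$ via the sum-to-product and triple-angle identities. One tiny remark: the step ``symmetric about $j=(n{+}1)/2$, so its minimum is at the endpoints'' also uses that $\sin$ is unimodal on $[3\pi/(2n),\,\pi-3\pi/(2n)]\subset(0,\pi)$; this is obvious but worth stating, since symmetry alone does not locate a minimum.
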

\begin{proof}
Suppose $G$ is $t$-spanner of $P$ for some $t<2$ and there are points $p$ and $q$ in $P$ that share an edge in the convex hull of $P$ but not in $G$. The Euclidean distance between $p$ and $q$ is $\sin(\pi/n)$. The shortest detour from $p$ to $q$ has length $\sin(\pi/n)+\sin(2\cdot\pi/n)$. It holds that
\[\frac{\sin(\pi/n)+\sin(2\cdot \pi/n)}{\sin(\pi/n)}\geq 2.\]
This is a contradiction to our assumption $t<2$.
\end{proof}

\begin{lemma}
\label{lem:tw2spannerplane}
Let $P$ be the set of $n\geq 3$ points equally spaced on the unit circle. Every $t$-spanner for $t<2$ of tree-width $2$ on $P$ is plane.
\end{lemma}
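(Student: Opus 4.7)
The plan is to use Lemma~\ref{lem:spannerconvexhull} together with the well-known fact that graphs of tree-width at most $2$ are exactly the graphs without a $K_4$ minor. So the entire argument boils down to producing a $K_4$ minor whenever the spanner fails to be plane.

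First I would observe that by Lemma~\ref{lem:spannerconvexhull}, every $t$-spanner with $t<2$ on $P$ contains all $n$ edges of the convex hull of $P$; call this cycle $C_n$. Suppose towards contradiction that $G$ is a $t$-spanner of tree-width $2$ which is not plane. Then some two of its straight-line edges cross. I would rule out hull--chord crossings first: a chord of the convex polygon lies in the closed polygon and meets the boundary only at its two endpoints, while a hull edge lies on the boundary, so a hull edge and a chord cannot cross in their interiors. Similarly, two hull edges cannot cross. Hence the offending crossing must be between two chords $\{p_a,p_c\}$ and $\{p_b,p_d\}$ whose endpoints interleave on the circle, i.e.\ $a<b<c<d$ in the cyclic order of $P$.

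Now I would build a $K_4$ minor. The cycle $C_n$ is split by $p_a,p_b,p_c,p_d$ into four arcs. Contract each of these four arcs to a single edge (this is a legal minor operation, obtained by contracting each arc's internal vertices to an endpoint). After these contractions the image of $C_n$ is a 4-cycle $p_a p_b p_c p_d$, and the two chords $\{p_a,p_c\}$ and $\{p_b,p_d\}$ survive as the two diagonals, producing a $K_4$ on $\{p_a,p_b,p_c,p_d\}$. Thus $G$ has $K_4$ as a minor, contradicting $\tw(G)\le 2$ via the classical characterisation of tree-width $\le 2$ graphs as the $K_4$-minor-free graphs. Therefore no two edges of $G$ cross, i.e.\ $G$ is plane.

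The main (minor) obstacle is just the case analysis for which pairs of edges can actually cross; once that is handled the $K_4$-minor construction is immediate because the convex hull provides the four connecting paths for free. No quantitative dilation estimate beyond Lemma~\ref{lem:spannerconvexhull} is needed.
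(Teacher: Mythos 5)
Your proposal is correct and follows essentially the same route as the paper: both use Lemma~\ref{lem:spannerconvexhull} to get the convex hull cycle, then contract the four hull arcs between the endpoints of two crossing edges to exhibit a $K_4$ minor, contradicting tree-width $2$. Your explicit handling of which edge pairs can cross and your appeal to the $K_4$-minor-free characterisation are only cosmetic differences from the paper's argument.
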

\begin{proof} 
Let $G$ be a $t$-spanner for $t<2$ of tree-width $2$ on $P$. By \cref{lem:spannerconvexhull}, $G$ contains the convex hull of $P$. For $n=3$ the statement holds, since the $K_3$ is plane and has tree-width $2$. So let $n\geq 4$ and suppose in the straight-line drawing of $G$ there is an edge-crossing. Let these edges be $\{p,q\}$ and $\{p,q'\}$. On the convex hull of $P$ there must be a path from $p$ to $p'$, not containing $q$ and $q'$, a path from $p'$ to $q$, not containing $p$ and $q'$, a path from $q$ to $q'$ not containing $p$ and $p'$, and finally a path from $q'$ to $p$, not containing $p'$ and $q$. If we perform path-contractions along these paths, leaving only $p$, $q$, $p'$ and $p'$, the resulting graph is the $K_4$, which is of tree-width 3. We reach a contradiction and can conclude that $G$ must be plane.
\end{proof}

The desired lower bound for spanners of tree-width $2$ now follows from~\cite{Dumitrescu2016} combined with Lemma \ref{lem:tw2spannerplane}.
\begin{proposition}
Let $P$ be a set of 23 points equally spaced on a circle. Every spanner of tree-width 2 on $P$ has dilation at least 1.430814.   
\end{proposition}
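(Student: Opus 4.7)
The plan is to prove the statement by contradiction, combining \cref{lem:tw2spannerplane} with a known lower bound for plane spanners on equally spaced points. Suppose for contradiction that $G$ is a spanner of tree-width $2$ on $P$ with dilation $t < 1.430814$. Since $1.430814 < 2$, we are in the regime where \cref{lem:tw2spannerplane} applies, so $G$ must be plane (its straight-line drawing on the 23 equally spaced points contains no edge crossing).

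Once planarity is established, the result reduces to a purely geometric lower bound on the dilation of plane spanners on $n=23$ equally spaced points on a circle. This is exactly the setting of Dumitrescu and T\'oth~\cite{Dumitrescu2016}, who prove a lower bound of $1.430814$ on the dilation of any plane spanner on such a point set. Applying their bound to $G$ contradicts our assumption that the dilation of $G$ is strictly less than $1.430814$, which completes the argument.

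The main subtlety (already handled in \cref{lem:tw2spannerplane}) is bridging the gap between tree-width~$2$ and planarity: a priori, tree-width~$2$ graphs need not be plane as straight-line drawings, so the argument crucially exploits that $P$ lies in convex position and that any dilation below $2$ forces the convex hull edges to be present (\cref{lem:spannerconvexhull}), so that a crossing would enable a $K_4$ minor via path contractions along the hull. No further calculation is required: the proposition follows immediately by chaining these two results.
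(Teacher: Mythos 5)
Your proposal is correct and follows exactly the paper's intended argument: a dilation below $1.430814 < 2$ forces planarity via \cref{lem:tw2spannerplane}, and then the lower bound of~\cite{Dumitrescu2016} for plane spanners on $23$ equally spaced points applies directly. The paper states this combination without further elaboration, so there is nothing to add.
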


\section{Conclusion and open questions}
In this paper, we give the first algorithm for computing geometric spanners with bounded tree-width for points in Euclidean space of constant dimension. Further, we show a strong dependency between the tree-width and the number of edges in connected planar graphs with few edges. 

Our lower bound proves that the dilation of our spanner is worst-case optimal for tree-width $k$. This naturally leads to the question of optimisation: Given a set of points and a parameter $k$, can we compute the minimum-dilation spanner of tree-width $k$? Already for $k=1$, that is for minimum-dilation trees, the problem is NP-hard \cite{klein2007computing, CHEONG2008188}. This raises the question of approximation:
Can we efficiently approximate the minimum-dilation spanner of tree-width $k$ in terms of its dilation within a factor better than $O(n/k^{d/d-1})$? As a bicriteria problem, the tree-width of the approximation may also be $f(k)$ for a suitable function $f$.

\bibliography{bib}

\end{document}